\documentclass[journal,12pt,onecolumn]{IEEEtran}

\usepackage{amsmath,amsfonts}
\usepackage{amssymb,verbatim,amsfonts,amsmath,amsthm}
\usepackage{geometry}
\usepackage{cite}
\usepackage{hyperref}
\usepackage{bm}
\usepackage[all]{xy}

\usepackage{longtable}
\usepackage{booktabs}
\usepackage{graphicx}
\usepackage[figuresright]{rotating}

\newtheorem{theorem}{Theorem}
\newtheorem{lemma}{Lemma}
\newtheorem{corollary}[theorem]{Corollary}

\newtheorem{remark}{Remark}
\newtheorem{definition}{Definition}

\begin{document}

\title{On Permutation Groups of Cyclic Codes over Finite Fields}
\author{Junjie Huang, Jicheng Ma and Chang-An Zhao
\thanks{J. Huang is with the Department of Mathematics, School of Mathematics, Sun Yat-sen University, Guangzhou 510275, P.R.China. (e-mail: huangjj76@mail2.sysu.edu.cn).}
\thanks{J. Ma is with the School of Mathematics and Artificial Intelligence, Chongqing University of Arts and Sciences, Chongqing 402160, P.R.China. (e-mail: ma\_jicheng@hotmail.com).}
\thanks{C.-A. Zhao is with the School of Mathematics, Sun Yat-sen University, Guangzhou 510275, P.R.China, and also with the Guangdong Key Laboratory of Information Security, Guangzhou 510006, P.R.China (e-mail: zhaochan3@mail.sysu.edu.cn).}

}

\maketitle

\begin{abstract}
   The permutation groups of cyclic codes are widely applicable in determining the weight distribution of codes, decoding theory and various other areas. In this paper, by employing two distinct matrix representations, we can relate cyclic codes with very long lengths and special generator polynomials to those with prime lengths. Consequently, we mainly determine the permutation groups of certain cyclic codes over $\mathbb{F}_{r^\alpha}$ with lengths $hp$, $r^mp^n$ and $pq$ and special generator polynomials where $h$ is a positive integer and $p$, $q$ and $r$ are distinct prime numbers. For length $pq$, we manage to provide the permutation groups of cyclic codes with generator polynomials $Q_{pq}(x)$(the $pq$-th cyclotomic polynomial) or others, which seems to be the first work about permutation groups of cyclic codes with generator polynomials that are factors of $x^{pq}-1$ but not factors of $x^p-1(\text{or }x^q-1)$.
\end{abstract}

\begin{IEEEkeywords}
Permutation groups, Cyclic codes, Matrix representations, Wreath product. 
\end{IEEEkeywords}

\section{Introduction}

In coding theory, cyclic codes have attracted considerable attention due to their efficient encoding and decoding algorithms~\cite{Blahut_2003}. Beyond this, the permutation groups of cyclic codes also hold significant theoretical and practical importance. The permutation groups of given codes can be applied to determine the weight distribution of codes~\cite{macwilliams1977}, decoding theory~\cite{huffman1998,9492151,1013146}, among other areas. Therefore, investigating the permutation groups of cyclic codes serves as a crucial bridge connecting abstract algebraic coding theory with practical engineering applications. Over the past decades, the permutation groups of linear codes have been extensively studied~\cite{PHELPS198645,BERGER19931,GHORPADE201380,9685181}. But so far, the permutation groups of only a few classes of linear codes have been classified, such as Hamming codes, Reed-Muller codes, etc~\cite{huffman1998}.

\subsection{Known Results}

In 1987, D{\"u}r determined the automorphism groups and permutation groups of Reed-Solomon codes and extended Reed-Solomon codes~\cite{DUR198769}. In 1997, Zanotti characterized the automorphism groups of a particular class of irreducible cyclic codes~\cite{ZANOTTI1997303}. In 1999, Berger and Charpin classified the automorphism groups of BCH codes affine-invariant codes~\cite{bergerAutomorphismGroupsBCH1999}. In 2007, Joyner et al. classified those finite groups that can arise as the automorphism group of an AG code, especially generalized Reed-Solomon codes, and gave an explicit description of how these groups appear~\cite{doi:10.1142/9789812772022_0008}. In 2010, Bienert and Klopsch studied the permutation groups of binary cyclic codes and classified all binary cyclic codes with primitive permutation groups~\cite{bienertAutomorphismGroupsCyclic2010}. In 2013, Guenda and Gulliver classified the permutation groups of cyclic codes with prime length  over a finite field and described the Sylow $p$-subgroup of the permutation groups of cyclic codes of length $p^m$. In 2021, Geiselhart et al. proposed a novel and effective permutation group decoder, called the automorphism ensemble (AE) decoder, based on the automorphism group of the given code~\cite{9492151}. Subsequently, Geiselhart et al. studied the automorphism group of polar codes and proved that the block lower-triangular affine group is contained therein~\cite{9518184}. Also in this year, Li et al. proved that the block lower-triangular affine group is the full affine automorphism group of polar codes~\cite{9685181}. In 2022, Liu et al. classified ${\rm PGL}(2,p^m)$-invariant codes over $\mathbb{F}_{p^h}$ with length $p^m+1$ and obtained infinite families of narrow-sense BCH codes admitting a $3$-transitive automorphism group~\cite{9666868}. In 2023, Hollmann investigated the permutation groups of general (repeated-root) cyclic codes and irreducible cyclic codes~\cite{HOLLMANN2023102146}. In 2025, Ma and Yan investigated the permutation groups of cyclic codes over $\mathbb{F}_2$ with lengths $2p$ and $p^n$ and irreducible generator polynomials by using a matrix presentation technique of cyclic codewords~\cite{maAutomorphismGroupsBinary2025}.

\subsection{Main Results}
In this paper, inspired by~\cite{maAutomorphismGroupsBinary2025}, we investigate the permutation groups of cyclic codes over arbitrary finite fields $\mathbb{F}_{r^\alpha}$ with specific lengths. Let $g(x)\in\mathbb{F}_{r^\alpha}[x]$ be a factor of $x^p-1$. Our main results of this paper are summarized as below.
\begin{itemize}
    \item For a positive integer $h$ and a prime number $p$, we completely determine the permutation group of $\mathcal{C}_{hp,g(x)}$(see Theorem~\ref{L_hp}). 
    \item Let $n$ be a positive number and $m$ be a non-negative number. For any $0\le u\le m$ and $0\le v\le n-1$, we completely determine the permutation group of $\mathcal{C}_{r^mp^n,g(x^{r^up^v})}$ and $\mathcal{C}_{hr^mp^n,g(x^{r^up^v})}$(see Theorem~\ref{L_rp} and Corollary~\ref{L_rp_col}). 
    \item For distinct prime numbers $p\neq r$ and $q\neq r$, we completely determine the permutation groups of $\mathcal{C}_{pq,Q_{pq}(x)}$, $\mathcal{C}_{pq,(x-1)Q_p(x)Q_q(x)}$ and $\mathcal{C}_{hpq,Q_p(x)Q_q(x)}$(see Theorem~\ref{L_pq} and Corollary~\ref{L_pq_col}). Moreover, it is noteworthy that, to our knowledge, it seems the first time that the permutation groups of cyclic codes whose generator polynomials are factors of $x^{pq}-1$ but not factors of $x^p-1(\text{or }x^q-1)$ has been obtained.
\end{itemize}

\subsection{Organization}
The paper is organized as follows. In Section~\ref{pre}, we briefly review cyclic codes, the permutation groups of cyclic codes, the notion of wreath product and provide the definition of matrix representations of cyclic codes. In Section~\ref{Per}, we determine the permutation groups of certain cyclic codes with lengths $hp$, $r^mp^n$ and $pq$ and special generator polynomials by employing two distinct matrix representations. In Section~\ref{Con}, we give a summary of the entire article.

\section{Preliminaries}\label{pre}

In this section, we provide a concise review of some preliminaries related to cyclic codes, the permutation groups of cyclic codes, matrix representations of cyclic codes and the notion of wreath product. 

\subsection{Cyclic Codes}

Let $\mathbb{F}_{r^\alpha}$ be a finite field where $r$ is a prime number and $\alpha$ is a positive integer. A $r^\alpha$-ary $[n,k,d]_{r^\alpha}$ linear code $\mathcal{C}$ is a linear subspace of $\mathbb{F}_{r^\alpha}^n$ with dimension $k$ and minimum distance $d$. A vector $c = (c_0,\cdots,c_{n-1})\in \mathcal{C}$ is called a codeword of $\mathcal{C}$. The formal definition of cyclic codes is given as follows.

\begin{definition}\cite[Def. 7.1.1]{Ling_Xing_2004}
    A subset $S$ of $\mathbb{F}_{r^\alpha}^n$ is cyclic if $(c_{n-1},c_0,c_1,\cdots,c_{n-2})\in S$ whenever $(c_0,c_1,\cdots,c_{n-1})\in S$. A linear code $\mathcal{C}$ is called a cyclic code if $\mathcal{C}$ is a cyclic set. 
\end{definition}

Consider the following correspondence:
\begin{align}\label{cyc_poly_cor}
    \pi : \mathbb{F}_{r^\alpha}^n&\rightarrow \mathbb{F}_{r^\alpha}[x]/(x^n-1),\notag\\
    (c_0,c_1,\cdots,c_{n-1})&\mapsto c_0 + c_1x + \cdots + c_{n-1}x^{n-1}.
\end{align}
It is well known that a nonempty subset $\mathcal{C}$ of $\mathbb{F}_{r^\alpha}^n$ is a cyclic code if and only if $\pi(\mathcal{C})$ is a principle ideal of $\mathbb{F}_{r^\alpha}[x]/(x^n-1)$. Moreover, the unique monic polynomial $g(x)$ of the least degree of the ideal $\pi(\mathcal{C})$ must be a divisor of $x^n-1$ and is called the generator polynomial of $\mathcal{C}$, and $h(x) = \frac{x^n-1}{g(x)}$ is referred to as the check polynomial of $\mathcal{C}$. In this paper, we denote by $\mathcal{C}_{n,g(x)}$ the cyclic code of length \(n\) with generator polynomial $g(x)\in\mathbb{F}_{r^\alpha}[x]$. 

The Euclidean inner product on $\mathbb{F}_{r^\alpha}$ is defined by
\[
    \langle c,d \rangle = \sum_{i=0}^{n-1} c_id_i,
\]
where $c = (c_0,\cdots,c_{n-1})$ and $d = (d_0,\cdots,d_{n-1})$. The (Euclidean) dual of a linear code $\mathcal{C}$ is 
\[
    \mathcal{C}^\perp = \{d\in\mathbb{F}_{r^\alpha}^n\mid \langle c,d \rangle = 0\text{ for all } c\in \mathcal{C}\}. 
\]
The dual code $\mathcal{C}^\perp$ of a cyclic code with generator polynomial $g(x)$ is also a cyclic code with the generator polynomial $h^\perp(x) = \frac{x^kh(x^{-1})}{h(0)}$, which is the reciprocal polynomial of the check polynomial $h(x)$ of $\mathcal{C}$.

\subsection{The Permutation Groups of Cyclic Codes}

Let $\mathcal{C}\subseteq \mathbb{F}_{r^\alpha}^n$ be a cyclic code with generator polynomial $g(x)\in\mathbb{F}_{r^\alpha}[x]$ and let $S_n$ be the symmetric group acting on $\{0,\cdots,n-1\}$. Then $S_n$ acts naturally on a codeword $c = (c_0,\cdots,c_{n-1})$ of $\mathcal{C}$, i.e.
\[
    c^\sigma = (c_{\sigma(0)},\cdots,c_{\sigma(n-1)}),
\]
where $\sigma$ is a permutation in $S_n$. The permutation group of a cyclic code $\mathcal{C}$ is defined by
\[
    \operatorname{Per}(\mathcal{C}) = \{ \sigma\in S_n\mid \mathcal{C}^\sigma = \mathcal{C} \}, 
\]
where $\mathcal{C}^\sigma = \{c^\sigma\mid c\in\mathcal{C}\}$. 

In~\cite{bienertAutomorphismGroupsCyclic2010}, Bienert and Klopsch gave a characterization of binary cyclic codes with primitive automorphism groups. In~\cite{guendaPermutationGroupsCyclic2013}, Guenda and Gulliver generalized the results of \cite{bienertAutomorphismGroupsCyclic2010} and established the following characterization on permutation groups of cyclic codes with prime length $p$.

\begin{lemma}\label{per_cyc_p}\cite[Thm. 3]{guendaPermutationGroupsCyclic2013}
    Let $\mathcal{C}$ be a cyclic code of length $p$ prime over $\mathbb{F}_{r^\alpha}$. Then $\operatorname{Per}(\mathcal{C})$ is a primitive group, and one of the following holds.
    \begin{enumerate}
        \item $\operatorname{Per}(\mathcal{C})$ is a solvable group of order $pm$ with $m$ a divisor of $p-1$ and $C_p\le \operatorname{Per}(\mathcal{C}) \le \operatorname{AGL}(1,p)$, with $p\ge 5$. Furthermore $\operatorname{Per}(\mathcal{C})$ contains a normal Sylow $p$-subgroup.
        \item $\operatorname{Per}(\mathcal{C}) = S_p$.
        \item If $p = r$ and $\alpha = 1$, then $\operatorname{Per}(\mathcal{C}) = \operatorname{AGL}(1,p)$.
        \item $\operatorname{Per}(\mathcal{C}) = \operatorname{PSL}(2,11)$ and $p = 11$ and $r = 3$. 
        \item $\operatorname{Per}(\mathcal{C}) = M_{23}$ and $p = 23$ and $r = 2$. 
        \item $\operatorname{Per}(\mathcal{C}) = \operatorname{P\Gamma L}(d,r^{d^b})$ where $b\in\mathbb{N}$, $d\ge 3$ is a prime number such that $\gcd(d,r^{d^b}-1)$ and $p = (r^{d^{b+1}} - 1)/(r^{d^b} - 1)$.
    \end{enumerate}
\end{lemma}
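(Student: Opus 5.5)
The statement is quoted from Guenda and Gulliver, but I would prove it along the following lines. Write $G=\operatorname{Per}(\mathcal{C})$, viewed as a subgroup of $S_p$ acting on $\{0,\dots,p-1\}$. Since $\mathcal{C}$ is cyclic, $G$ contains the $p$-cycle $\tau: i\mapsto i+1 \pmod p$, so $G$ is transitive of prime degree. Any block system of a transitive group has block size dividing the degree, so $G$ is primitive. This gives the first assertion.

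For the classification, I would apply Burnside's theorem on transitive groups of prime degree: $G$ is either solvable, in which case $C_p\le G\le \operatorname{AGL}(1,p)$, or $2$-transitive. In the solvable case, $G$ has a normal Sylow $p$-subgroup $\langle\tau\rangle$ and order $pm$ with $m\mid p-1$, which is case 1. The restriction $p\ge 5$ should come from the observation that for $p=2,3$ the groups $\operatorname{AGL}(1,p)$ already equal $S_p$, so those instances fall under case 2.

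In the $2$-transitive case, the classification of finite simple groups, through the list of $2$-transitive groups of prime degree (Feit, Guralnick), says $G$ is one of $A_p$, $S_p$, $\operatorname{PSL}(2,11)$ on $11$ points, $M_{11}$, $M_{23}$, or $\operatorname{PSL}(d,s)\le G\le \operatorname{P\Gamma L}(d,s)$ with $p=(s^d-1)/(s-1)$. For each candidate I then need to decide when it is the full automorphism group of a cyclic code over $\mathbb{F}_{r^\alpha}$. The key tool is that $\mathbb{F}_{r^\alpha}^p$ must carry a $G$-invariant subspace containing $\mathcal{C}$. So I would use the submodule structure of the permutation module, which is $\mathbf{1}\oplus$ (a deleted module) when $r\ne p$.

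\begin{itemize}
\item For $A_p$ and $S_p$: the only invariant subspaces are $0$, the repetition code, its dual, and the whole space, and all of these are $S_p$-invariant. So $A_p$ never occurs and we get case 2.
\item For $M_{11}$: it is $4$-transitive, and I would show any $M_{11}$-invariant code is also $S_{11}$-invariant. (The ternary Golay code has automorphism group containing $M_{11}$ only after monomial extension, so no new permutation group arises.)
\item For $M_{23}$: it occurs exactly via the binary Golay code, so $r=2$.
\item For $\operatorname{PSL}(2,11)$: it arises from the ternary quadratic residue code of length $11$, so $r=3$.
\item For the projective case: the nontrivial invariant codes are the projective geometry codes over the characteristic of $s$. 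This forces $s$ to be a power of $r$, and the Frobenius part gives $\operatorname{P\Gamma L}$. I expect the conditions $d$ prime and $s=r^{d^b}$ to come from requiring that the full permutation group is not larger, using known results on automorphism groups of geometric codes.
\end{itemize}

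Case 3, where $p=r$ and $\alpha=1$, is the modular case. Here the permutation module is uniserial, with invariant subspaces $\langle (x-1)^i\rangle$. These codes are known to be invariant under $x\mapsto ax+b$, which gives $\operatorname{AGL}(1,p)$.

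The main obstacle is the $2$-transitive analysis. Checking precisely which almost simple groups are \emph{full} permutation groups requires detailed modular representation theory of the permutation modules. In particular, the projective-space case needs the parameter restrictions to be derived carefully. I would rely on the cited literature (Bienert–Klopsch for $r=2$, extended to general $r$) rather than redo the module computations.
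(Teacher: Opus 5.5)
The paper gives no proof of this lemma; it is quoted from Guenda--Gulliver. So your outline has to stand on its own. The opening is correct and is the standard route: the $p$-cycle makes $\operatorname{Per}(\mathcal{C})$ transitive of prime degree, hence primitive. Burnside's theorem then gives either $C_p\le\operatorname{Per}(\mathcal{C})\le\operatorname{AGL}(1,p)$ with a normal Sylow $p$-subgroup, or $2$-transitivity. After that, two steps you do spell out fail, and a third is incomplete.

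\textbf{The restriction $p\ge 5$.} Your explanation is wrong. Item 1 is about \emph{every} group between $C_p$ and $\operatorname{AGL}(1,p)$, and for $p=3$ the group $C_3<S_3=\operatorname{AGL}(1,3)$ really occurs. Over $\mathbb{F}_4$, or any $\mathbb{F}_{r^\alpha}$ with $3\mid r^\alpha-1$, take $\mathcal{C}=\mathcal{C}_{3,x-\omega}$ with $\omega$ a primitive cube root of unity. Every transposition of $S_3$ is a translate of the multiplier $i\mapsto -i$. That multiplier sends the defining set $\{\omega\}$ to $\{\omega^{-1}\}\neq\{\omega\}$, so $\operatorname{Per}(\mathcal{C})=C_3$. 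This case lies in none of items 1--6, so the restriction cannot be proved and must be dropped.

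\textbf{The modular case $p=r$.} You only obtain $\operatorname{AGL}(1,p)\le\operatorname{Per}(\mathcal{C})$. You still need two facts:
\begin{itemize}
\item $S_p$ is the only proper overgroup on your list. Indeed $A_p$ misses the odd $(p-1)$-cycle, and the remaining groups have Sylow-$p$ normalizers of order less than $p(p-1)$.
\item $S_p$-invariance forces $\mathcal{C}$ to be trivial.
\end{itemize}

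\textbf{The projective case.} The $2$-transitive elimination is announced rather than proved. Excluding $M_{11}$, tying $\operatorname{PSL}(2,11)$ to $r=3$ and $M_{23}$ to $r=2$, and forcing $s$ to be a power of $r$ all need the modular structure of the permutation modules in every characteristic, which you defer to the literature.

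The real content here is your phrase ``the Frobenius part gives $\operatorname{P\Gamma L}$'', and it needs an argument. Over $\mathbb{F}_r$ there is a clean one. In the Singer model $\mathbb{F}_{s^d}^*/\mathbb{F}_s^*$, the semilinear map $x\mapsto x^r$ acts on coordinates as the multiplier $i\mapsto ri$. This multiplier preserves every cyclic code over $\mathbb{F}_r$, and together with $\operatorname{PSL}(d,s)=\operatorname{PGL}(d,s)$ it generates $\operatorname{P\Gamma L}(d,s)$.

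Over $\mathbb{F}_{r^\alpha}$ only $i\mapsto r^\alpha i$ comes for free, and the claim can fail. Take $p=73=(8^3-1)/(8-1)$ and let $\mathcal{C}$ be the $\mathbb{F}_8$-span of the functions $x_0^{a_0}x_1^{a_1}x_2^{a_2}$ with $a_0+a_1+a_2=7$ on $\operatorname{PG}(2,8)$.
\begin{itemize}
\item $\mathcal{C}$ is $\operatorname{PGL}(3,8)$-invariant, hence cyclic in the Singer ordering.
\item Coordinatewise squaring sends $x_0^3x_1^3x_2$ to $x_0^6x_1^6x_2^2$.
\item $x_0^6x_1^6x_2^2$ is not in $\mathcal{C}$: on $\mathbb{F}_8^3\setminus\{0\}$ the only relation among reduced monomials is $\prod_i(1-x_i^7)=0$, and it does not involve this monomial.
\end{itemize}
Hence $\operatorname{Per}(\mathcal{C})=\operatorname{PSL}(3,8)$, a proper subgroup of $\operatorname{P\Gamma L}(3,8)$. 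In general one only has $\operatorname{PSL}(d,s)\le\operatorname{Per}(\mathcal{C})\le\operatorname{P\Gamma L}(d,s)$.

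\textbf{The arithmetic conditions.} These do not come from maximality of the group, as you suggest. Write $s=r^f$. Then $p=\prod_{k\mid fd,\,k\nmid f}\Phi_k(r)$, and this is prime only if $fd$ is the sole such $k$. That forces $d$ to be prime and $f=d^b$. Also $\gcd(d,s-1)=1$ holds automatically, since otherwise $d\mid p$.
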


Note that, in Lemma~\ref{per_cyc_p}, the \( S_p \) denotes the symmetric group of degree \( p \), the \( C_p \) denotes the cyclic group of order \( p \), the \(\operatorname{AGL}(1, p)\) denotes the 1-dimensional affine general linear group of degree \( p \), the \(\operatorname{PSL}(2,11)\) denotes the 2-dimensional projective special linear group of degree \( 11 \), the \( \operatorname{P\Gamma L}(d,r^{d^b}) \) denotes the projective semi-linear group, and the \( M_{23} \) denotes the Mathieu group of degree 23.

\subsection{Matrix Representations of Cyclic Codes}

Let $\ell$ be a divisor of $n$. We represent the codeword $c = (c_0,\cdots,c_{n-1})$ of $\mathcal{C}$ by two types of matrix representations as follows.

\begin{definition}
    ~
    \begin{enumerate}
        \item For any $0\le i\le \frac{n}{\ell}-1$, we place the symbols $(c_{i\ell},\cdots,c_{i\ell+\ell-1})$ in the $i$-th row of the matrix. Therefore, the codeword $c$ of $\mathcal{C}$ can be represented by the following $\frac{n}{\ell}\times\ell$ matrix
        \[
            Mr_{\ell}(c) = \begin{pmatrix}
                c_0 & c_1 & \cdots & c_{\ell-1} \\
                c_{\ell} & c_{\ell+1} & \cdots & c_{2\ell-1} \\
                \vdots & \vdots & \ddots & \vdots \\
                c_{n-\ell} & c_{n-\ell+1} & \cdots & c_{n-1}
            \end{pmatrix}. 
        \]
        We define $Mr_{\ell}(c)$ as the first type of matrix representation for $\mathcal{C}$.
        \item For any $0\le i\le \frac{n}{\ell}-1$, we place the symbols $(c_{i\ell},\cdots,c_{i\ell+\ell-1})^T$ in the $i$-th column of the matrix. Therefore, the codeword $c$ of $\mathcal{C}$ can be represented by the following $\ell\times\frac{n}{\ell}$ matrix 
        \[
            Mc_{\ell}(c) = \begin{pmatrix}
                c_0 & c_{\ell} & \cdots & c_{n-\ell} \\
                c_{1} & c_{\ell+1} & \cdots & c_{n\ell-\ell+1} \\
                \vdots & \vdots & \ddots & \vdots \\
                c_{\ell-1} & c_{2\ell-1} & \cdots & c_{n-1}
            \end{pmatrix}. 
        \]
        We define $Mc_{\ell}(c)$ as the second type of matrix representation for $\mathcal{C}$.
    \end{enumerate}
\end{definition}
\begin{remark}
    From the perspective of linear algebra, the matrix representations of cyclic codes above are merely simple algebraic representations, and the two representations are mutual matrix transposes of each other. However, these two matrix representations prove highly useful for determining the permutation group of the cyclic codes, and they will be used in two distinct cases. 
\end{remark}

\subsection{Wreath Product}

The notion of a wreath product arises very naturally in the study of imprimitive groups. For more details, the reader can refer to~\cite{dixon_permutation_1996}.

\begin{definition}
    Let \( A \) be a group and let \( H \) be a group acting on a set \( \Omega \). The direct product \( A^\Omega \) is the set of sequences \( \bar{a} = (a_\omega)_{\omega \in \Omega} \) in \( A \), indexed by \( \Omega \), with a group operation given by pointwise multiplication. The action of \( H \) on \( \Omega \) can be extended to an action on \( A^\Omega \), namely by defining  
    \[
        h \cdot (a_\omega)_{\omega \in \Omega} := (a_{h^{-1}\cdot\omega})_{\omega \in \Omega}
    \]
    for all \( h \in H \) and all \( (a_\omega)_{\omega \in \Omega} \in A^\Omega \). Then the wreath product \( A  \wr_\Omega  H \) of \( A \) by \( H \) is the semidirect product \( A^\Omega \rtimes H \) with the action of \( H \) on \( A^\Omega \) given above. The subgroup \( A^\Omega \) of \( A^\Omega \rtimes H \) is called the base of the wreath product.
\end{definition}

In particular, if \(\Omega = \{0, \cdots, n-1\}\), then we omit \( \Omega \) from the notation above. Thus \( A \wr H \) stands for \( A \wr_{\{0, \cdots, n-1\}} H \), not the regular wreath product notation \( A \wr_\Omega H \).

\section{Permutation Groups of Cyclic Codes with Specific Lengths}\label{Per}

For a prime number $p\neq r$ and a polynomial $g(x)\in\mathbb{F}_{r^\alpha}[x]$ that is a factor of $x^p-1$. The permutation group $\operatorname{Per}(\mathcal{C}_{p,g(x)})$ of $\mathcal{C}_{p,g(x)}$ is fully classified by Lemma \ref{per_cyc_p}. In this section, we determine the permutation groups of cyclic codes with length $hp,r^mp^n,pq$ and generator polynomial $f(x)$ where $f(x)$ and $g(x)$ are related.

\subsection{Cyclic Codes of Length $hp$}

For a positive integer $h$ and a prime number $p$, we have $x^p-1\mid x^{hp}-1$. Therefore, each factor of $x^p-1$ is also a factor of $x^{hp}-1$. In the following, we choose a polynomial $g(x)\in\mathbb{F}_{r^\alpha}[x]$ which is a factor of $x^p-1$ and $\deg (g(x)) = m>1$. Consider two cyclic codes $\mathcal{C}_{p,g(x)}$ and $\mathcal{C}_{hp,g(x)}$. Let $v_0\in\mathcal{C}_{p,g(x)}$ and $v\in\mathcal{C}_{hp,g(x)}$ respectively be the codewords corresponding to the generator polynomial $g(x)$. Using the first type of matrix representation for $\mathcal{C}_{hp,g(x)}$, we have
\[
    Mr_{h}(v) = \begin{pmatrix}
        v_0 \\
        {\bf 0}_p \\
        \vdots \\
        {\bf 0}_p
    \end{pmatrix},
\]
where ${\bf 0}_p$ represents a row vector of $p$-zeros. Let $c\in\mathcal{C}_{hp,g(x)}$ be the generator codeword corresponding to $x^ig(x)$ for some $0\le i\le n-m-1$. Then $Mr_{h}(c)$ can only be in the following two cases
\[
    \begin{pmatrix}
        \cdots \\
        {\bf 0}_p \\
        v_0^{\sigma_0^t} \\
        {\bf 0}_p \\
        \cdots
    \end{pmatrix}\text{ and }\begin{pmatrix}
        \cdots \\
        {\bf 0}_p \\
        v_1 \\
        v_2 \\
        {\bf 0}_p \\
        \cdots
    \end{pmatrix},
\]
where $\sigma_0 = (1,2,\cdots,p)\in S_p$ and $v_1+v_2 = v_0^{\sigma_0^l}$ for some $0\le t,l\le p-1$. 

We are now able to present our result about the permutation group $\operatorname{Per}(\mathcal{C}_{hp,g(x)})$ of $\mathcal{C}_{hp,g(x)}$.

\begin{theorem}\label{L_hp}
    The permutation group $\operatorname{Per}(\mathcal{C}_{hp,g(x)})$ of $\mathcal{C}_{hp,g(x)}$ is isomorphic to $S_h \wr G$, the wreath product of \(S_h\) by \(G\), where $G\cong \operatorname{Per}(\mathcal{C}_{p,g(x)})$, the permutation group of $\mathcal{C}_{p,g(x)}$. 
\end{theorem}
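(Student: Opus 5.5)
The plan is to describe $\mathcal{C}_{hp,g(x)}$ intrinsically in terms of the columns of the matrix $Mr_h(c)$. Then I would show that $\operatorname{Per}(\mathcal{C}_{hp,g(x)})$ must preserve the partition of coordinates into these columns, and read off the wreath product structure.

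\textbf{Step 1 (description of the code).} Index the coordinates of $\mathbb{F}_{r^\alpha}^{hp}$ by $i=sp+j$ with $0\le s\le h-1$ and $0\le j\le p-1$. Thus $Mr_h(c)$ has $h$ rows of length $p$, and the $j$-th column is the block $B_j=\{i: i\equiv j \pmod p\}$. Since $g(x)\mid x^p-1\mid x^{hp}-1$, whether $g(x)\mid c(x)$ depends only on $c(x) \bmod (x^p-1)$. This residue is the polynomial of the vector $\varphi(c)=\sum_{s} (\text{row } s \text{ of } Mr_h(c))\in\mathbb{F}_{r^\alpha}^p$. Hence
\[
\mathcal{C}_{hp,g(x)}=\{c\in\mathbb{F}_{r^\alpha}^{hp} : \varphi(c)\in\mathcal{C}_{p,g(x)}\}.
\]
From this, $S_h\wr G\le\operatorname{Per}(\mathcal{C}_{hp,g(x)})$ is immediate. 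Here the base $S_h^p$ permutes the entries inside each column $B_j$ independently, which leaves $\varphi(c)$ unchanged. The top group $G\cong\operatorname{Per}(\mathcal{C}_{p,g(x)})$ permutes the columns, i.e.\ it acts by $sp+j\mapsto sp+\tau(j)$, which replaces $\varphi(c)$ by its image under $\tau$. This group acts faithfully and has order $(h!)^p|G|$.

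\textbf{Step 2 (blocks are preserved).} This is the main obstacle. I would pass to the dual, using $\operatorname{Per}(\mathcal{C})=\operatorname{Per}(\mathcal{C}^\perp)$. Since $\mathcal{C}_{hp,g(x)}=\varphi^{-1}(\mathcal{C}_{p,g(x)})$, its dual is $\varphi^{*}(\mathcal{C}_{p,g(x)}^\perp)$. This consists of the vectors that are constant on each block $B_j$, with the vector of block values lying in $\mathcal{C}_{p,g(x)}^\perp$; the dimension count $hp-m$ versus $m$ confirms this.

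I then claim that coordinates $i,i'$ lie in the same block if and only if $d_i=d_{i'}$ for every $d\in\mathcal{C}^\perp_{hp,g(x)}$. This relation is clearly preserved by any $\sigma\in\operatorname{Per}$. The forward direction is trivial. For the converse, suppose $i\in B_a$ and $i'\in B_b$ with $a\ne b$, and that every $d\in\mathcal{C}_{p,g(x)}^\perp$ satisfies $d_a=d_b$. Then $e_a-e_b\in\mathcal{C}_{p,g(x)}$, and hence so are all of its cyclic shifts. Since $p$ is prime and $b-a\not\equiv 0$, these shifts span the whole sum-zero subspace of dimension $p-1$. That would force $\deg g=m\le 1$, contradicting $m>1$.

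\textbf{Step 3 (conclusion).} By Step 2 every $\sigma\in\operatorname{Per}(\mathcal{C}_{hp,g(x)})$ permutes the blocks, inducing some $\tau\in S_p$. Composing $\sigma$ with a suitable element of the base $S_h^p$, I may assume $\sigma(sp+j)=sp+\tau(j)$. Pick any $u\in\mathcal{C}_{p,g(x)}$ and embed it as the first row (other rows zero). Applying $\sigma$ shows that $u$ permuted by $\tau$ lies in $\mathcal{C}_{p,g(x)}$, so $\tau\in G$. Therefore $\operatorname{Per}(\mathcal{C}_{hp,g(x)})=S_h^p\rtimes G=S_h\wr G$, where $G$ acts on the $p$ columns exactly as in the wreath product definition.
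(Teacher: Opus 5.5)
Your proof is correct and has the same skeleton as the paper's. You exhibit $S_h\wr G$ inside $\operatorname{Per}(\mathcal{C}_{hp,g(x)})$, then rule out everything else using the fact that a two-point difference $e_i-e_{i'}$ lies in the code exactly when $i\equiv i'\pmod p$; this ultimately rests on $\gcd(x^k-1,x^p-1)=x-1$ and $\deg g(x)>1$.

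The routes differ in execution.
\begin{itemize}
\item \textbf{Lower bound.} The paper (Lemmas~\ref{L_hp_row} and~\ref{L_hp_col}) tests the $h$-cycles, transpositions and column lifts one case at a time against the generator codewords $x^ig(x)$. Your folding description $\mathcal{C}_{hp,g(x)}=\varphi^{-1}(\mathcal{C}_{p,g(x)})$ gives the base group and the top group in a single line.
\item \textbf{Upper bound.} The paper takes a permutation outside $S_h\wr G$ and normalizes it by elements of $S_h\wr G$ so that it fixes the $(1,1)$-entry and sends the $(2,1)$-entry to some $(1,b)$-entry. It then applies this to the codeword of $x^p-1$ and obtains $x^{b-1}-1\notin\mathcal{C}_{hp,g(x)}$. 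You instead show that the column partition is an intrinsic invariant of the code. The blocks are the classes of the relation $d_i=d_{i'}$ for all $d\in\mathcal{C}^\perp$, which is equivalent to $e_i-e_{i'}\in\mathcal{C}$, so the detour through the dual is optional. This makes block preservation automatic and removes the need for the normalization step.
\item \textbf{Block-preserving permutations.} Your Step~3 treats these explicitly: you reduce to a pure column lift by base elements and show the induced $\tau$ lies in $G$. The paper's final argument only treats permutations that break the columns, leaving this case implicit in the ``only if'' direction of Lemma~\ref{L_hp_col}.
\end{itemize}
Your version therefore relies on the same key fact as the paper but is tighter and more self-contained.
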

 
\begin{remark}
    Theorem 3.3 in~\cite{maAutomorphismGroupsBinary2025} characterizes a subgroup of the permutation group of binary cyclic code $\mathcal{C}_{hn,g(x)}$ of length $hn$. In this paper we show that, not only over $\mathbb{F}_2$, the permutation group of $\mathcal{C}_{hn,g(x)}$ is indeed isomorphic to the subgroup when $n$ equals a prime number $p$ over \textbf{\textit{arbitrary finite fields }}$\mathbb{F}_{r^\alpha}$.
\end{remark}

Before we provide the proof of Theorem \ref{L_hp}, we first prove the following two lemmas, which consider the permutation group $\operatorname{Per}(\mathcal{C}_{hp,g(x)})$ of the first type of matrix representation for $\mathcal{C}_{hp,g(x)}$ from the perspectives of row permutation of any column and column permutation, respectively. Both two lemmas can be proven by examining the action of specific permutations on the generator codewords and, furthermore, they generalize Lemmas 3.1 and 3.2 in~\cite{maAutomorphismGroupsBinary2025}.

\begin{lemma}\label{L_hp_row}
    The permutation group $\operatorname{Per}(\mathcal{C}_{hp,g(x)})$ of $\mathcal{C}_{hp,g(x)}$ contains a subgroup that is isomorphic to $(S_h)^p$. 
\end{lemma}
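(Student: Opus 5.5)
The plan is to give an intrinsic description of $\mathcal{C}_{hp,g(x)}$ in terms of the first type of matrix representation, instead of working with the generator codewords $x^ig(x)$ one at a time. Throughout we view $Mr(c)$ as the $h\times p$ matrix whose $i$-th row is $(c_{ip},\dots,c_{ip+p-1})$, as in the display for $Mr_h(v)$ above. Its $j$-th column then consists of the coordinates with indices $j, p+j,\dots,(h-1)p+j$, i.e.\ the indices congruent to $j$ modulo $p$.

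\textbf{Step 1 (folding characterization).} Write $c(x)=\sum_{i=0}^{h-1}x^{ip}R_i(x)$, where $R_i(x)$ is the polynomial of the $i$-th row. Since $x^{p}\equiv 1 \pmod{x^p-1}$, we have $c(x)\equiv \sum_{i}R_i(x)\pmod{x^p-1}$. Because $g(x)\mid x^p-1\mid x^{hp}-1$, membership in the ideal generated by $g(x)$ in $\mathbb{F}_{r^\alpha}[x]/(x^{hp}-1)$ is simply the condition $g(x)\mid c(x)$. This in turn is equivalent to $g(x)\mid \big(c(x)\bmod (x^p-1)\big)=\sum_i R_i(x)$. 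Hence
\[
c\in\mathcal{C}_{hp,g(x)} \iff \textstyle\sum_{i=0}^{h-1}(\text{$i$-th row of } Mr(c))\in \mathcal{C}_{p,g(x)}.
\]
In words, a word lies in the code exactly when the vector of its column sums is a codeword of $\mathcal{C}_{p,g(x)}$. This is consistent with the description of the generator codewords given before Theorem~\ref{L_hp}: there one has either a single row equal to $v_0^{\sigma_0^t}$, or two rows with $v_1+v_2=v_0^{\sigma_0^l}$.

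\textbf{Step 2 (the subgroup).} For each $p$-tuple $(\tau_0,\dots,\tau_{p-1})\in (S_h)^p$, define $\sigma\in S_{hp}$ by $\sigma(ip+j)=\tau_j(i)p+j$. Thus $\sigma$ permutes the entries of column $j$ by $\tau_j$ and fixes every column as a set. Such a $\sigma$ leaves each column sum unchanged, so by Step~1 it maps $\mathcal{C}_{hp,g(x)}$ into itself. Since $\sigma$ is a bijection on a finite set, it maps the code onto itself, so $\sigma\in\operatorname{Per}(\mathcal{C}_{hp,g(x)})$. The map $(\tau_j)_j\mapsto\sigma$ is a group homomorphism, because composition is done columnwise. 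It is injective, since $\sigma$ determines each $\tau_j$ by its action on column $j$. Its image is therefore a subgroup of $\operatorname{Per}(\mathcal{C}_{hp,g(x)})$ isomorphic to $(S_h)^p$.

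The only real content is Step 1. The point to be careful about there is that divisibility by $g(x)$ in the quotient ring is well defined, which requires $g(x)\mid x^{hp}-1$. It also requires reducing modulo $x^p-1$ rather than modulo $x^{hp}-1$, and this reduction is legitimate precisely because $g(x)\mid x^p-1$. Once that characterization is in place, the rest is bookkeeping on the action convention $c^\sigma=(c_{\sigma(0)},\dots)$. Whether one uses $\sigma$ or $\sigma^{-1}$ there does not matter, since the set of column-preserving permutations is closed under inversion.
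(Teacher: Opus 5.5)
Your proof is correct, but it is organized differently from the paper's. The paper works generator by generator. For a generator codeword $x^ig(x)$, which has at most one nonzero entry per column since $\deg g(x)<p$, it applies the $h$-cycle $(b,p+b,\dots,(h-1)p+b)$ and the transposition $(b,p+b)$ to column $b$. It observes that the resulting change is $\pm\beta x^{k}(x^p-1)$, hence a codeword. It then uses linearity and the fact that these two permutations generate $S_h$ on that column, and asserts the full $(S_h)^p$ ``similarly''. You instead prove once the folding characterization: $c\in\mathcal{C}_{hp,g(x)}$ if and only if the column-sum vector of $Mr_h(c)$ lies in $\mathcal{C}_{p,g(x)}$. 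After that, every column-preserving permutation is visibly in $\operatorname{Per}(\mathcal{C}_{hp,g(x)})$, and you check the injective homomorphism $(S_h)^p\to S_{hp}$ explicitly rather than leaving it implicit. The core fact is the same: $g(x)\mid x^p-1$, so moving entries within a column changes a word by a multiple of $x^p-1$. Your packaging, however, needs no analysis of the shape of generator codewords and no generation argument, and it pays off further. A column permutation $\bar\sigma$ simply permutes the column-sum vector by $\sigma$, so the same characterization gives Lemma~\ref{L_hp_col} at once, and hence $S_h\wr G\le\operatorname{Per}(\mathcal{C}_{hp,g(x)})$. The paper's hands-on check matches the generator-codeword manipulations it reuses in later proofs, but it is less transparent and leaves more verification to the reader.
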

\begin{proof}
    Let $c$ be any generator codeword of $\mathcal{C}_{hp,g(x)}$, and $\sigma$ be the cyclic permutation $(b,p+b,\cdots,(h-1)p+b)$ of order $h$, which exactly acts on the $b$-th column of $Mr_{h}(c)$. Then we can derive that 
    \[
        Mr_{h}(c)^\sigma - Mr_{h}(c) = \begin{pmatrix}
            0 & \cdots & 0 & \cdots & 0 \\
              & \ddots &   & \ddots &   \\
            0 & \cdots & -\beta & \cdots & 0 \\
            0 & \cdots & \beta & \cdots & 0 \\
              & \ddots &   & \ddots &   \\
            0 & \cdots & 0 & \cdots & 0 
        \end{pmatrix},
    \]
    where the $b$-th column of $Mr_{h}(c)^\sigma - Mr_{h}(c)$ only has two symbols $-\beta$ and $\beta$ if $\beta\neq 0$ and all other columns are zeros. It is easy to see that $Mr_{h}(c)^\sigma - Mr_{h}(c)$ corresponds to the polynomial $\beta x^{ip+b-1}(x^p-1)$ for some $0\le i\le h-1$. Since $g(x)\mid \beta x^{ip+b-1}(x^p-1)$, we conclude that $Mr_{h}(c)^\sigma - Mr_{h}(c)\in \mathcal{C}_{hp,g(x)}$, hence $Mr_{h}(c)^\sigma\in \mathcal{C}_{hp,g(x)}$. It follows that $\sigma\in \operatorname{Per}(\mathcal{C}_{hp,g(x)})$. 

    Let $\tau$ be the permutation $(b,p+b)$ of order $2$, which also acts on the $b$-th column of $Mr_{h}(c)$. Then, similar to above, we also have $Mr_{h}(c)^\tau - Mr_{h}(c)$ corresponds to the polynomial $\pm\beta(x^p-1)$. Therefore, we also have $\tau\in \operatorname{Per}(\mathcal{C}_{hp,g(x)})$. Consequently, the permutations $\sigma$ and $\tau$ generate a subgroup of $\operatorname{Per}(\mathcal{C}_{hp,g(x)})$ that is isomorphic to $S_h$. Similarly, one can see that $\operatorname{Per}(\mathcal{C}_{hp,g(x)})$ contains a subgroup isomorphic to $(S_h)^p$, which is generate by $\Lambda = \{(b,p+b,\cdots,(h-1)p+b),(b,p+b)\mid 1\le b\le p\}$. The proof is completed.
\end{proof}

\begin{lemma}\label{L_hp_col}
    The permutation group $\operatorname{Per}(\mathcal{C}_{hp,g(x)})$ of $\mathcal{C}_{hp,g(x)}$ contains a subgroup $G$, where $G\cong \operatorname{Per}(\mathcal{C}_{p,g(x)})$, the permutation group of $\mathcal{C}_{p,g(x)}$. 
\end{lemma}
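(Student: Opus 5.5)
We will realise $G$ as the group of \emph{column permutations} of the first type of matrix representation. For $\rho\in\operatorname{Per}(\mathcal{C}_{p,g(x)})\le S_p$, acting on the column indices $\{0,\dots,p-1\}$, define $\hat\rho\in S_{hp}$ by
\[
\hat\rho(ip+j)=ip+\rho(j), \qquad 0\le i\le h-1,\ 0\le j\le p-1 .
\]
Thus $\hat\rho$ applies $\rho$ simultaneously to every row of $Mr_h(c)$. The map $\rho\mapsto\hat\rho$ is clearly an injective group homomorphism $S_p\to S_{hp}$. It therefore suffices to show that $\hat\rho$ preserves $\mathcal{C}_{hp,g(x)}$ whenever $\rho$ preserves $\mathcal{C}_{p,g(x)}$; the image $G$ will then be a subgroup of $\operatorname{Per}(\mathcal{C}_{hp,g(x)})$ isomorphic to $\operatorname{Per}(\mathcal{C}_{p,g(x)})$.

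The key step is a row-wise description of the code. Since $g(x)\mid x^p-1$, we have $x^{ip}\equiv 1 \pmod{g(x)}$. Write a word $c$ of length $hp$ as $c(x)=\sum_{i=0}^{h-1}x^{ip}c_i(x)$, where $c_i(x)$ (of degree $<p$) is the polynomial of the $i$-th row of $Mr_h(c)$. Then
\[
c(x)\equiv \sum_i c_i(x) \pmod{g(x)}.
\]
Moreover, $g(x)\mid x^{hp}-1$, so membership in $\mathcal{C}_{hp,g(x)}$ is simply divisibility by $g(x)$. Hence
\[
c\in\mathcal{C}_{hp,g(x)} \iff \sum_{i=0}^{h-1}\mathrm{row}_i(Mr_h(c))\in\mathcal{C}_{p,g(x)},
\]
where the row sum is the sum of length-$p$ vectors. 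Here we use that $\sum_i c_i(x)$ has degree $<p$ and that $g\mid x^p-1$, so divisibility by $g$ is exactly membership in $\mathcal{C}_{p,g(x)}$.

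With this characterization the conclusion follows directly. The action of $\hat\rho$ permutes the entries within each row by $\rho$, so the row sum of $c^{\hat\rho}$ equals $(\text{row sum of } c)^{\rho}$. If $c\in\mathcal{C}_{hp,g(x)}$, its row sum lies in $\mathcal{C}_{p,g(x)}$, hence so does the $\rho$-image, and therefore $c^{\hat\rho}\in\mathcal{C}_{hp,g(x)}$.

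The only point needing care is the convention in the row-sum identity. The paper's notation $c^\sigma=(c_{\sigma(0)},\dots)$ and its indexing of columns from $1$ in Lemma~\ref{L_hp_row} must be matched consistently, so that the map to $S_{hp}$ is a homomorphism and not an anti-homomorphism. Either way, the image is a subgroup isomorphic to $\operatorname{Per}(\mathcal{C}_{p,g(x)})$. Alternatively, one could check $\hat\rho$ only on generator codewords $x^kg(x)$, using the two block shapes displayed before Theorem~\ref{L_hp}, but the row-sum criterion handles both shapes at once.
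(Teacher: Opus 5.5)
Your proof is correct, but it verifies membership differently from the paper. The paper lifts $\sigma$ to the same column permutation $\bar\sigma$ and then checks $\bar\sigma$ only on the generator codewords $x^kg(x)$, with a case split. In the first case the codeword sits in a single row of $Mr_h(c)$. In the second it straddles two consecutive rows $v_1,v_2$. There the paper subtracts the single-row word carrying $v_1^{\sigma}+v_2^{\sigma}=v_0^{\sigma_0^l\sigma}$ and observes that the remainder $(-v_2^{\sigma},v_2^{\sigma})$ is a multiple of $x^p-1$.

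Your folding criterion, $c\in\mathcal{C}_{hp,g(x)}\iff\sum_i \mathrm{row}_i(Mr_h(c))\in\mathcal{C}_{p,g(x)}$, comes from $x^{p}\equiv 1 \pmod{g(x)}$. It packages both cases into one basis-free statement, and the paper's two cases are just local instances of it.

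The criterion also gives more than the lemma:
\begin{enumerate}
\item Taking $c$ with a single nonzero row gives the paper's converse: $\bar\sigma\in\operatorname{Per}(\mathcal{C}_{hp,g(x)})$ forces $\sigma\in\operatorname{Per}(\mathcal{C}_{p,g(x)})$.
\item Consider any permutation $ip+j\mapsto \tau_j(i)p+\rho(j)$ with $\tau_j\in S_h$ and $\rho\in G$. It transforms the row sum $s$ into $s^{\rho}$. So your criterion yields Lemma~\ref{L_hp_row} and the full containment $S_h\wr G\le\operatorname{Per}(\mathcal{C}_{hp,g(x)})$ in one stroke.
\end{enumerate}
The paper's generator-by-generator check stays closer to the explicit matrix pictures it reuses later.

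Your worry about homomorphism versus anti-homomorphism is unnecessary. As maps, $\widehat{\rho\sigma}=\hat\rho\circ\hat\sigma$, regardless of how the action on words is written.
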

\begin{proof}
    Let $c$ be any generator codeword of $\mathcal{C}_{hp,g(x)}$. For each permutation $\sigma\in S_p$, let $\bar{\sigma}\in S_{hp}$ be the lifting permutation of $\sigma$ such that $\bar{\sigma} = \rho_0(\sigma)\rho_p(\sigma)\rho_{2p}(\sigma)\cdots\rho_{(h-1)p}(\sigma)$ where $\rho_{ip}$ maps coordinates $j$ in $\sigma$ to $ip+j$ with $0\le i \le h-1$ and $1\le j \le p$. Indeed, the permutation $\bar{\sigma}$ is the column permutation of $Mr_{h}(c)$. Now, we claim that $\bar{\sigma}\in \operatorname{Per}(\mathcal{C}_{hp,g(x)})$ if and only if $\sigma\in \operatorname{Per}(\mathcal{C}_{p,g(x)})$. Firstly, it is easy to verify that if
    \[
        Mr_{h}(a^\prime) = \begin{pmatrix}
            \cdots \\
            {\bf 0}_p \\
            a \\
            {\bf 0}_p \\
            \cdots
        \end{pmatrix}
    \]
    for any $a^\prime\in\mathcal{C}_{hp,g(x)}$, then $a\in \mathcal{C}_{p,g(x)}$ by considering the linear combination of the basis of $\mathcal{C}_{hp,g(x)}$. On the contrary, if $a\in \mathcal{C}_{p,g(x)}$, then it is easy to see that
    \[
        \begin{pmatrix}
            \cdots \\
            {\bf 0}_p \\
            a \\
            {\bf 0}_p \\
            \cdots
        \end{pmatrix} = Mr_{h}((0\cdots,0,a,0,\cdots,0))\in\mathcal{C}_{hp,g(x)}.
    \]

    Assume that $\bar{\sigma}\in \operatorname{Per}(\mathcal{C}_{hp,g(x)})$. Then for any $a\in \mathcal{C}_{p,g(x)}$, we have
    \[
        Mr_{h}((a,0,\cdots,0))^{\bar{\sigma}} = \begin{pmatrix}
            a^\sigma \\
            {\bf 0}_p \\
            \cdots \\
            {\bf 0}_p
        \end{pmatrix}\in\mathcal{C}_{hp,g(x)}.
    \]
    Therefore, $a^\sigma\in\mathcal{C}_{p,g(x)}$. It follows that $\sigma\in \operatorname{Per}(\mathcal{C}_{p,g(x)})$. 

    Now, we assume that $\sigma\in \operatorname{Per}(\mathcal{C}_{p,g(x)})$. 
    \begin{itemize}
        \item If $Mr_{h}(c)$ is of the form
        \[
            \begin{pmatrix}
                \cdots \\
                {\bf 0}_p \\
                v_0^{\sigma_0^t} \\
                {\bf 0}_p \\
                \cdots
            \end{pmatrix},
        \]
        then we have
        \[
            Mr_{h}(c)^{\bar{\sigma}} = \begin{pmatrix}
                \cdots \\
                {\bf 0}_p \\
                v_0^{\sigma_0^t\cdot\sigma} \\
                {\bf 0}_p \\
                \cdots
            \end{pmatrix}.
        \]
        Note that $v_0^{\sigma_0^t\cdot\sigma}\in\mathcal{C}_{p,g(x)}$, therefore it can be linearly represented by $v_0,v_0^{\sigma_0},\cdots,v_0^{\sigma_0^{p-m-1}}$. Hence $Mr_{h}(c)^{\bar{\sigma}}\in \mathcal{C}_{hp,g(x)}$. 
        \item If $Mr_{h}(c)$ is of the form
        \[
            \begin{pmatrix}
                \cdots \\
                {\bf 0}_p \\
                v_1 \\
                v_2 \\
                {\bf 0}_p \\
                \cdots
            \end{pmatrix},
        \]
        then we have
        \[
            Mr_{h}(c)^{\bar{\sigma}} = \begin{pmatrix}
                \cdots \\
                {\bf 0}_p \\
                v_1^{\rho_{ip}(\sigma)} \\
                v_2^{\rho_{(i+1)p}(\sigma)} \\
                {\bf 0}_p \\
                \cdots
            \end{pmatrix},
        \]
        for some $0\le i \le h-1$. Since $v_1+v_2 = v_0^{\sigma_0^l}$ for some $p-m\le l\le p-1$, we have $v_1^{\rho_{ip}(\sigma)} + v_2^{\rho_{(i+1)p}(\sigma)} = v_0^{\sigma_0^l\cdot \sigma}\in \mathcal{C}_{p,g(x)}$. Moreover, we have 
        \[
            Mr_{h}(c)^{\bar{\sigma}} - \begin{pmatrix}
                \cdots \\
                {\bf 0}_p \\
                v_1^{\rho_{ip}(\sigma)}+v_2^{\rho_{(i+1)p}(\sigma)} \\
                {\bf 0}_p \\
                {\bf 0}_p \\
                \cdots
            \end{pmatrix} = \begin{pmatrix}
                \cdots \\
                {\bf 0}_p \\
                -v_2^{\rho_{(i+1)p}(\sigma)} \\
                v_2^{\rho_{(i+1)p}(\sigma)} \\
                {\bf 0}_p \\
                \cdots
            \end{pmatrix}\in \mathcal{C}_{hp,g(x)}\,(\text{Similar to the proof of Lemma \ref{L_hp_row}}).
        \]
        Hence, we derive that $Mr_{h}(c)^{\bar{\sigma}}\in \mathcal{C}_{hp,g(x)}$.
    \end{itemize}
    From the above deduction, we conclude that $\bar{\sigma}\in \operatorname{Per}(\mathcal{C}_{hp,g(x)})$. It follows that all permutation $\bar{\sigma}\in\operatorname{Per}(\mathcal{C}_{hp,g(x)})$ that permute columns generate a subgroup $G$ which is isomorphic to $\operatorname{Per}(\mathcal{C}_{p,g(x)})$. The proof is completed.
\end{proof}

\noindent{\raggedright{\it Proof of Theorem \ref{L_hp}.}} Note that by Lemmas \ref{L_hp_row} and \ref{L_hp_col}, the permutation group $G$, where $G\cong \operatorname{Per}(\mathcal{C}_{p,g(x)})$, normalizes but not commutes with $(S_h)^p$. Therefore, we can see that $\operatorname{Per}(\mathcal{C}_{hp,g(x)})$ contains a subgroup that is isomorphic to $S_h \wr G$, the wreath product of \(S_h\) by \(G\). In the following, we will prove that the remaining permutations belong to the set $S_{hp}\backslash(S_h \wr G)$ are not in $\operatorname{Per}(\mathcal{C}_{hp,g(x)})$. 

Let $\varphi\in S_{hp}\backslash(S_h \wr G)$. By multiplying $\varphi$ with a suitable permutation of $S_h \wr G$, we can obtain a permutation that fixes the $(1,1)$-entry and maps the $(2,1)$-entry to the $(1,b)$-entry of each matrix representation of codeword for some $1< b\le p$. Hence, we may assume that $\varphi$ fixes the $(1,1)$-entry and maps the $(2,1)$-entry to the $(1,b)$-entry. Let $c$ be the codeword corresponding to the polynomial $x^p-1$. Then we have 
\[
    Mr_{h}(c)^\varphi = \begin{pmatrix}
        -1 & 0 & \cdots & 0 \\
        1 & 0 & \cdots & 0 \\
        \vdots & \vdots & \ddots & \vdots \\
        0 & 0 & \cdots & 0
    \end{pmatrix}^\varphi = \begin{pmatrix}
        -1 & \cdots & 1 & \cdots & 0 \\
        0 & \cdots & 0 & \cdots & 0 \\
        \vdots & \ddots & \vdots & \ddots & \vdots \\
        0 & \cdots & 0 & \cdots & 0
    \end{pmatrix},
\]
which corresponds to the polynomial $x^{b-1}-1$. If $Mr_{h}(c)^\varphi\in \mathcal{C}_{hp,g(x)}$, then we have 
\[
    g(x)\mid \gcd(x^{b-1}-1,x^p-1) = x-1,
\]
which is a contradiction since $\deg(g(x))>1$. Therefore, $Mr_{h}(c)^\varphi\notin \mathcal{C}_{hp,g(x)}$. It follows that $\varphi\notin \operatorname{Per}(\mathcal{C}_{hp,g(x)})$. Hence $\operatorname{Per}(\mathcal{C}_{hp,g(x)})\cong S_h \wr G$. The proof is completed. \qed

\subsection{Cyclic Codes of Length $r^mp^n$}

Let $n$ be a positive number and $m$ be a non-negative number. In the following, we choose a polynomial $g(x)\in\mathbb{F}_{r^\alpha}[x]$ which is a factor of $x^p-1$, then we can derive that
\[
    g(x^{r^up^v})\,\big|\, x^{r^up^{v+1}}-1\,\big|\, x^{r^mp^n} - 1,
\]
for any $0\le u\le m$ and $0\le v\le n-1$. Therefore $g(x^{r^up^v})$ is indeed a factor of $x^{r^mp^n} - 1$. Consider two cyclic codes $\mathcal{C}_{r^{m-u}p^{n-v},g(x)}$ and $\mathcal{C}_{r^mp^n,g(x^{r^up^v})}$. Let $v_0\in\mathcal{C}_{r^{m-u}p^{n-v},g(x)}$ and $v\in\mathcal{C}_{r^mp^n,g(x^{r^up^v})}$ be the codewords corresponding to the generator polynomial $g(x)$ and $g(x^{r^up^v})$ respectively. Then we use the second type of matrix representation for $\mathcal{C}_{r^mp^n,g(x^{r^up^v})}$, we have
\[
    Mc_{r^up^v}(v) = \begin{pmatrix}
        v_0 \\
        {\bf 0} \\
        \cdots \\
        {\bf 0}
    \end{pmatrix},
\]
where ${\bf 0}$ represents a row vector of $r^{m-u}p^{n-v}$-zeros. 

Now we present our result about the permutation group of $\mathcal{C}_{r^mp^n,g(x^{r^up^v})}$.

\begin{theorem}\label{L_rp}
    For any $0\le u\le m$ and $0\le v\le n-1$. The permutation group $\operatorname{Per}(\mathcal{C}_{r^mp^n,g(x^{r^up^v})})$ of $\mathcal{C}_{r^mp^n,g(x^{r^up^v})}$ is isomorphic to $G \wr S_{r^up^v}$, the wreath product of \(G\) by \(S_{r^up^v}\), where $G\cong \operatorname{Per}(\mathcal{C}_{r^{m-u}p^{n-v},g(x)})$, the permutation group of $\mathcal{C}_{r^{m-u}p^{n-v},g(x)}$. 
\end{theorem}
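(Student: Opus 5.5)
Write $N=r^mp^n$, $k=r^up^v$ and $L=N/k=r^{m-u}p^{n-v}$. The key observation is that $g(x^{k})$ is a polynomial in $x^k$. Hence $\mathcal{C}_{N,g(x^k)}$ splits as a direct sum of $k$ copies of $\mathcal{C}_{L,g(x)}$. The copies are indexed by the residue classes $j \bmod k$, that is, by the rows of the second type of matrix representation $Mc_{k}$.

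Concretely, I would first prove the following structural claim. A vector $c$ lies in $\mathcal{C}_{N,g(x^k)}$ if and only if every row of $Mc_k(c)$ lies in $\mathcal{C}_{L,g(x)}$. To see this, write $c(x)=\sum_{j=0}^{k-1}x^j c_j(x^k)$ with $\deg c_j<L$. Since $g(x^k)\mid x^N-1$, membership in the code means $g(x^k)\mid c(x)$ in $\mathbb{F}_{r^\alpha}[x]$. Because the monomials $x^j$ with $0\le j<k$ give a free basis of $\mathbb{F}[x]$ over $\mathbb{F}[x^k]$, this divisibility holds if and only if $g(y)\mid c_j(y)$ for every $j$, where $y=x^k$. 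The row $j$ of $Mc_k(c)$ is exactly the coefficient vector of $c_j$, so the claim follows. Thus the code is $\mathcal{C}_{L,g(x)}^{\,k}$, with the $k$ factors placed on the $k$ row-blocks.

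From this, one inclusion is immediate. Applying a permutation of $\operatorname{Per}(\mathcal{C}_{L,g(x)})$ independently inside each row, and then permuting the $k$ rows arbitrarily, preserves the code. This gives a subgroup isomorphic to $G\wr S_{k}$, with base $G^k$ and top group $S_k$ acting on rows. The proof mirrors Lemmas~\ref{L_hp_row} and~\ref{L_hp_col}, but it is simpler here because the code is an honest direct sum.

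For the reverse inclusion, I would show that any $\varphi\in\operatorname{Per}(\mathcal{C}_{N,g(x^k)})$ maps rows to rows. Once that holds, composing $\varphi$ with a row permutation in $S_k$ gives a permutation that fixes every row setwise. Its restriction to each row must then preserve $\mathcal{C}_{L,g(x)}$, by the structural claim applied to vectors supported on one row, and so it lies in $G^k$.

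The main obstacle is proving that rows are blocks of $\operatorname{Per}$. My plan is to use the analogue of the argument at the end of the proof of Theorem~\ref{L_hp}, working with codewords supported on a single row. Suppose $\varphi$ sends two positions of one row to two different rows. Consider a minimum-weight codeword $w$ supported on that row and containing both positions. Then $w^\varphi$ is a codeword whose support meets at least two rows, with each row's part a codeword of $\mathcal{C}_{L,g(x)}$ of weight at least $d(\mathcal{C}_{L,g(x)})$. This forces the total weight of $w^\varphi$ to be at least $2d$, whereas $\operatorname{wt}(w^\varphi)=\operatorname{wt}(w)=d$, a contradiction.

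For this to work I need a minimum-weight codeword of $\mathcal{C}_{L,g(x)}$ through any prescribed pair of coordinates. One might hope to get this from transitivity of $G$ plus the cyclic structure, but that alone does not give $2$-point coverage. Instead I would use the fact that the minimum-weight codewords span the code, or an argument via the supports of all codewords. This yields an equivalence relation ``lie together in the support of some codeword of a single row'', which is $\varphi$-invariant. Its classes are exactly the rows, provided $\mathcal{C}_{L,g(x)}$ is not decomposable on its $L$ coordinates; the degenerate cases, such as $g(x)=x-1$ or $g(x)=1$, need separate treatment. Indecomposability here comes from cyclicity together with $g$ being a factor of $x^p-1$.
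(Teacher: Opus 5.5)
Your structural claim (the code is the direct sum of $k$ copies of $\mathcal{C}_{L,g(x)}$ on the rows of $Mc_k$, via freeness of $\mathbb{F}_{r^\alpha}[x]$ over $\mathbb{F}_{r^\alpha}[x^k]$) is sound. So are the inclusion $G\wr S_k\le\operatorname{Per}(\mathcal{C}_{N,g(x^k)})$ and the reduction of the reverse inclusion to ``rows are blocks''. These match the first half of the paper's proof. The gap is the block step, which is the whole content of the reverse inclusion.

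As you suspect, a minimum-weight codeword through a prescribed pair need not exist. Take $g=Q_p(x)$ with $p\ge 3$ and $L>p$. Then the minimum weight of $\mathcal{C}_{L,g(x)}$ is $2$, and the weight-$2$ codewords are exactly $\lambda x^b(x^a-1)$ with $p\mid a$. So no minimum-weight codeword contains positions from two different residue classes mod $p$. These codewords also span only the proper subcode $\mathcal{C}_{L,x^p-1}$, so ``minimum-weight codewords span the code'' is false too.

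Your fallback relation, ``lie together in the support of some codeword of a single row'', is defined through the rows. Its $\varphi$-invariance is therefore exactly what has to be proved. Moreover, any two positions of a row already lie in the support of a common codeword of that row, so the relation sees only the rows and carries no information about $g$. It cannot be $\varphi$-invariant in general. For $g=x^p-1$, $L>p$ and $k>1$, the code $\mathcal{C}_{N,x^{pk}-1}$ is a direct sum of $pk$ sum-zero codes of length $L/p$. Its permutation group $S_{L/p}\wr S_{pk}$ is strictly larger than $G\wr S_k=(S_{L/p}\wr S_p)\wr S_k$.

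The repair is to use codewords of minimal \emph{support} rather than minimum weight. Every $\varphi\in\operatorname{Per}$ permutes them. Each lies in a single row, since projecting it onto a row it meets would give a codeword with smaller support. They also span the code. So the graph joining two positions in the support of a common minimal codeword is $\varphi$-invariant. Its connected components are the indecomposable components of the code, and these are the rows precisely when $\mathcal{C}_{L,g(x)}$ is indecomposable.

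That indecomposability is the real input, and you only assert it. One proof runs as follows.
\begin{itemize}
\item The cyclic shift permutes the indecomposable components, so their supports are the residue classes modulo some $e\mid L$.
\item Split $g$ along these classes. Each piece $x^jg_j(x^e)$ is a codeword of degree at most $\deg g$, hence a scalar multiple of $g$. Since $g(0)\neq 0$, this forces $g(x)=g_0(x^e)$.
\item If $e>1$, either $r\mid e$ or $p\mid e$. If $r\mid e$, then $g$ is an $r$-th power, which is impossible for $g\neq 1$ because $x^p-1$ is separable. If $p\mid e$, the root set of $g$ is stable under multiplication by $p$-th roots of unity, so $g=x^p-1$.
\end{itemize}
Hence the genuine exceptions are $g=1$ and $g=x^p-1$. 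There, for $r^up^v>1$, the statement is actually false; for example $g=1$ gives $\operatorname{Per}=S_N$. By contrast $g=x-1$ is harmless, since $\mathcal{C}_{L,x-1}$ is indecomposable.

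For comparison, the paper avoids the general decomposition theory by exhibiting the linking codewords explicitly. Weight-$2$ words from multiples of $x^{pk}-1$ keep positions of a row that are congruent mod $p$ in a common row, using that no weight-$1$ vector lies in $\mathcal{C}_{L,g(x)}$. Then a minimum-weight word of $\mathcal{C}_{p,g(x)}$, combined with folding modulo $x^p-1$, links the different residue classes.
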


\begin{remark}
    The above result extends those in~\cite{maAutomorphismGroupsBinary2025} by determining the permutation groups of cyclic codes with more lengths and by loosing the restriction that the generator polynomials are irreducible.
\end{remark}

\noindent{\raggedright{\it Proof of Theorem \ref{L_rp}.}} Let $\bar{\sigma} = (1,2,\cdots,r^mp^n)\in \operatorname{Per}(\mathcal{C}_{r^mp^n,g(x^{r^up^v})})$. Then we have
    \[
        Mc_{r^up^v}(v)^{\bar{\sigma}^L} = \begin{pmatrix}
            {\bf 0} \\
            \cdots \\
            v_0^{\sigma^l} \\
            \cdots \\
            {\bf 0}
        \end{pmatrix},
    \]
    where $\sigma = (1,2,\cdots,r^{m-u}p^{n-v})\in\operatorname{Per}(\mathcal{C}_{r^{m-u}p^{n-v},g(x)})$, $L = l\cdot r^up^v + t$ with $0\le t< r^up^v$ and $v_0^{\sigma^l}$ lies in the $t+1$-row. Additionally, for each codeword $w\in\mathcal{C}_{r^mp^n,g(x^{r^up^v})}$ and
    \[
        Mc_{r^up^v}(w) = \begin{pmatrix}
            w_1 \\
            w_2 \\
            \cdots \\
            w_{r^up^v}
        \end{pmatrix},
    \]
    we can derive that $w_i\in\mathcal{C}_{r^{m-u}p^{n-v},g(x)}$ with $1\le i\le r^up^v$ by considering the linear combination of the basis of $\mathcal{C}_{r^mp^n,g(x^{r^up^v})}$. The converse is also true. Therefore, we can see that any permutation that acts on $i$-th row of $Mc_{r^up^v}(w)$ belongs to $\operatorname{Per}(\mathcal{C}_{r^mp^n,g(x^{r^up^v})})$ if and only if its restriction on $w_i$ belongs to $\operatorname{Per}(\mathcal{C}_{r^{m-u}p^{n-v},g(x)})$. Consequently, these permutations generate a subgroup isomorphic to $G^{r^up^v}$, where $G\cong \operatorname{Per}(\mathcal{C}_{r^{m-u}p^{n-v},g(x)})$. Furthermore, any permutation that is a row permutation of $Mc_{r^up^v}(w)$ clearly belongs to $\operatorname{Per}(\mathcal{C}_{r^mp^n,g(x^{r^up^v})})$. Similar to the proof of Lemma \ref{L_hp_row}, we know that these permutations generate a subgroup isomorphic to $S_{r^up^v}$, which normalizes but not commutes with $G^{r^up^v}$. Hence, we can see that $\operatorname{Per}(\mathcal{C}_{r^mp^n,g(x^{r^up^v})})$ contains a subgroup that is isomorphic to $G \wr S_{r^up^v}$, the wreath product of \(G\) by \(S_{r^up^v}\). In the following, we will prove that the remaining permutations belong to the set $S_{r^mp^n}\backslash(G \wr S_{r^up^v})$ are not in $\operatorname{Per}(\mathcal{C}_{r^mp^n,g(x^{r^up^v})})$. 

    Let $\varphi\in S_{r^mp^n}\backslash(G \wr S_{r^up^v})$. By multiplying $\varphi$ with a suitable permutation of $G \wr S_{r^up^v}$, we can assume that $\varphi$ maintains the $(1, 1)$-entry in the first row and maps certain entries in the first row of $Mc_{r^up^v}(w)$ to other rows. Next, we firstly prove the following claim.

    \noindent{\bf Claim.} If $r^{m-u}p^{n-v}>p$ and $\varphi\in \operatorname{Per}(\mathcal{C}_{r^mp^n,g(x^{r^up^v})})$, then, for any $1\le b\le p$, $\varphi$ maps $(b,1),(p+b,1),\cdots,((r^{m-u}p^{n-v-1}-1)p+b,1)$ entries to the same row.

    \noindent{\raggedright{\it Proof of Claim.}} If not, then W.O.L.G. we can assume that $\varphi$ maps $(b,1)$ and $(p+b,1)$ entries to different rows. Consider $w\in\mathcal{C}_{r^mp^n,g(x^{r^up^v})}$ with
    \[
        Mc_{r^up^v}(w) = \begin{pmatrix}
            \cdots & -1 & \cdots & 1 & \cdots \\
            \cdots & 0 & \cdots & 0 & \cdots \\
             &   & \cdots &  & 
        \end{pmatrix},
    \]
    where the $(b,1)$-entry is $-1$, the $(p+b,1)$-entry is $1$ and other entries are $0$. Then we have
    \[
        Mc_{r^up^v}(w)^{\varphi} = \begin{pmatrix}
             &  & \cdots &  &  \\
            \cdots & -1 & \cdots & 0 & \cdots \\
             &  & \cdots &  &  \\
            \cdots & 0 & \cdots & 1 & \cdots \\
             &  & \cdots &  & 
        \end{pmatrix}\notin \mathcal{C}_{r^mp^n,g(x^{r^up^v})}, 
    \]
    since $(\cdots,0,1,0,\cdots)\notin\mathcal{C}_{r^{m-u}p^{n-v},g(x)}$, which contradicts with $\varphi\in \operatorname{Per}(\mathcal{C}_{r^mp^n,g(x^{r^up^v})})$. Hence, the claim is right. 

    By the above claim, we can assume that $\varphi$ maintains the $(1, 1)$-entry in the first row and maps certain entries in $\{(i,1)\mid 2\le i \le p\}$ to other rows. For any $a = (1,\cdots,\beta,\cdots)\in\mathcal{C}_{p,g(x)}$, consider $a^\prime\in\mathcal{C}_{r^mp^n,g(x^{r^up^v})}$ with
    \[
        Mc_{r^up^v}(a^\prime) = \begin{pmatrix}
            a & 0 & \cdots  \\
            0 & 0 & \cdots \\
              & \cdots & 
        \end{pmatrix}. 
    \]
    Suppose $\varphi$ maintains all non-zero entries in $a$ in the first row and $\beta$ lies in the $(b_0+1,1)$-entry. Let $\tau = (1,2,\cdots,p)\in \operatorname{Per}(\mathcal{C}_{p,g(x)})$ and $a^\prime_t\in\mathcal{C}_{r^mp^n,g(x^{r^up^v})}$ with 
    \[
        Mc_{r^up^v}(a^\prime_t) = \begin{pmatrix}
            a^{\tau^t} & 0 & \cdots  \\
            0 & 0 & \cdots \\
              & \cdots & 
        \end{pmatrix},
    \]
    for $0\le t\le p-1$. Since $\gcd(b_0,p) = 1$, we can derive that
    \[
        \{tb_0\bmod{p}\mid 0\le t\le p-1\} = \{0,1,\cdots,p-1\}. 
    \]
    Therefore, there must exist $t_0\in\{0,1,\cdots,p-1\}$ such that $\varphi$ maintains certain non-zero entries in $a^{\tau^{t_0b_0}}$ in the first row and maps certain non-zero entries in $a^{\tau^{t_0b_0}}$ to other rows of $Mc_{r^up^v}(a^\prime_{t_0b_0})$. Otherwise, all entries in $\{(i,1)\mid 1\le i \le p\}$ would be maintained in the first row by $\varphi$ which contradicts with the assumption of $\varphi$. So, by multiplying $\varphi$ with a suitable permutation, we can assume that $\varphi$ fixes the $(1, 1)$-entry and maps certain non-zero entries in $a$ to other rows. 

    Suppose that the minimum distance of $\mathcal{C}_{p,g(x)}$ is $d$ and $\varphi\in \operatorname{Per}(\mathcal{C}_{r^mp^n,g(x^{r^up^v})})$. We choose $a = (1,\cdots)\in\mathcal{C}_{p,g(x)}$ such that the Hamming weight of $a$ equals $d$. Let $a^\prime\in\mathcal{C}_{r^mp^n,g(x^{r^up^v})}$ with
        \[
            Mc_{r^up^v}(a^\prime) = \begin{pmatrix}
                a & 0 & \cdots  \\
                0 & 0 & \cdots \\
                  & \cdots & 
            \end{pmatrix}. 
        \]
        By the assumption of $\varphi$, we have
        \[
            Mc_{r^up^v}(a^\prime)^\varphi = \begin{pmatrix}
                a_1 \\
                a_2 \\
                \cdots \\
                a_{r^u}
            \end{pmatrix},
        \]
        and there must exist $i\in \{2,\cdots,p\}$ such that $\varphi$ maps the $(i,1)$ non-zero entry in $a$ to the $(jp+1,1)$-entry of $Mc_{r^up^v}(a^\prime)^\varphi$ for some $j\in\{1,\cdots,r^{m-u}p^{n-v-1}-1\}$. Otherwise, by adding a suitable codeword $c\in\mathcal{C}_{r^{m-u}p^{n-v},g(x)}$ corresponding to the polynomial $\tilde{g}(x)(x^p-1)$ for some $\tilde{g}(x)\in \mathbb{F}_{r^\alpha}[x]$, we have
        \[
            a_1 + c = (a^\prime{}^\prime,0,\cdots,0),
        \]
        where ${\bf 0}\neq a^\prime{}^\prime \in \mathcal{C}_{p,g(x)}$ with Hamming weight less than $d$, which is a contradiction. Now, we consider $w_t\in\mathcal{C}_{r^mp^n,g(x^{r^up^v})}$ corresponding to the polynomial $x^{tp}-1$ for all $t\in\{1,\cdots,r^{m-u}p^{n-v-1}-1\}$ with
        \[
            Mc_{r^up^v}(w_t) = \begin{pmatrix}
                -1 & \cdots & 1 & \cdots \\
                0 & \cdots & 0 & \cdots \\
                  & \cdots &  & 
            \end{pmatrix}.
        \]
        Similarly, by the assumption of $\varphi$, the permutation $\varphi$ must map the $(tp + 1,1)$-entry of $Mc_{r^up^v}(w_t)$ to the $(t^\prime p+1,1)$-entry of $Mc_{r^up^v}(w_t)^\varphi$ for some $t^\prime\in\{1,\cdots,r^{m-u}p^{n-v-1}-1\}$. Therefore, there exists $t_0\in\{1,\cdots,r^{m-u}p^{n-v-1}-1\}$ such that $\varphi$ maps the $(t_0p + 1,1)$-entry to the $(jp+1,1)$-entry, which is a contradiction since $\varphi$ also maps the $(i,1)$-entry to the $(jp+1,1)$-entry. Hence, we derive that $\varphi\notin \operatorname{Per}(\mathcal{C}_{r^mp^n,g(x^{r^up^v})})$. From the above deduction, we conclude that $\operatorname{Per}(\mathcal{C}_{r^mp^n,g(x^{r^up^v})})\cong G \wr S_{r^up^v}$. The proof is completed. \qed

Note that $x^{r^mp^n}-1\mid x^{hr^mp^n}-1$ for any positive integer $h$. Similar to the proof of Theorem~\ref{L_rp}, we can obtain the following corollary.

\begin{corollary}\label{L_rp_col}
    For any $0\le u\le m$ and $0\le v\le n-1$. The permutation group $\operatorname{Per}(\mathcal{C}_{hr^mp^n,g(x^{r^up^v})})$ of $\mathcal{C}_{hr^mp^n,g(x^{r^up^v})}$ is isomorphic to $G \wr S_{r^up^v}$, the wreath product of \(G\) by \(S_{r^up^v}\), where $G\cong \operatorname{Per}(\mathcal{C}_{hr^{m-u}p^{n-v},g(x)})$, the permutation group of $\mathcal{C}_{hr^{m-u}p^{n-v},g(x)}$. 
\end{corollary}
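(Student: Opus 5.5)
We mirror the proof of Theorem~\ref{L_rp}, replacing the length $r^mp^n$ by $N = hr^mp^n$ throughout. Write $s = r^up^v$ and $N' = hr^{m-u}p^{n-v}$, so that $N = sN'$. Let $v_0 \in \mathcal{C}_{N',g(x)}$ and $v \in \mathcal{C}_{N,g(x^{s})}$ be the codewords of the generator polynomials.

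\textbf{Step 1 (row decomposition).} Use the second matrix representation $Mc_{s}$, which arranges a word of length $N$ as an $s\times N'$ matrix whose $i$-th row collects the coordinates congruent to $i-1$ modulo $s$. Since $g(x^s)$ involves only powers of $x^s$, each shift $x^{L}g(x^s)$ with $L = ls+t$ has $Mc_s$ equal to the matrix with $v_0^{\sigma^l}$ in row $t+1$ and zeros elsewhere. By linearity,
\[
\mathcal{C}_{N,g(x^s)} = \{\, w : \text{every row of } Mc_s(w) \text{ lies in } \mathcal{C}_{N',g(x)} \,\}.
\]
Equivalently, $\mathcal{C}_{N,g(x^s)}$ is the direct sum of $s$ copies of $\mathcal{C}_{N',g(x)}$ placed on the residue classes modulo $s$. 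The converse inclusion follows from a dimension count: both sides have dimension $s(N'-\deg g)$.

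\textbf{Step 2 (lower bound).} A permutation acting within a single row preserves the code if and only if its restriction lies in $G \cong \operatorname{Per}(\mathcal{C}_{N',g(x)})$. Any permutation of the rows preserves the code, because every row ranges over the same subcode. These two families generate $G \wr S_s$ inside $\operatorname{Per}(\mathcal{C}_{N,g(x^s)})$.

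\textbf{Step 3 (upper bound).} Let $\varphi \in \operatorname{Per}(\mathcal{C}_{N,g(x^s)})$. We need to show that $\varphi$ maps each row, as a block of coordinates, onto a row.

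The cleanest route is intrinsic. The code is the direct sum $\bigoplus_{i} C_i$ over the row supports, and we would like to show that the rows are exactly the supports of the "indecomposable components" of $\mathcal{C}_{N',g(x)}$ copies. Permutation automorphisms preserve the finest decomposition of a code into a direct sum of codes with disjoint supports; this decomposition is given by the connected components of the relation "two coordinates lie together in the support of some minimal-support codeword". Automorphisms therefore permute the components.

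It remains to determine whether $\mathcal{C}_{N',g(x)}$ itself is indecomposable.

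\emph{If it is indecomposable}, the components are exactly the $s$ rows. Then $\varphi$ permutes the rows, and after composing with an element of $S_s$, $\varphi$ fixes every row and acts on each row by an element of $G$. Hence $\varphi \in G \wr S_s$.

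\emph{If it is decomposable}, the argument still works. The finer components are permuted by $\varphi$, and $G$ already contains all permutations of the components within one row that preserve the code. However, $\varphi$ might move a component of one row into a different row. In that case we would refine the argument as the paper does in the Claim: use weight-$2$ codewords coming from $x^{tp}-1$ together with minimum-weight codewords of $\mathcal{C}_{p,g(x)}$ to show that $\varphi$ keeps the coordinates of each row together. Alternatively, one can observe that the decomposition structure within a row is itself cyclic-invariant, so row membership remains determined.

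\textbf{Main obstacle.} I expect this last point to be the main difficulty: proving that $\varphi$ cannot mix pieces of different rows when $\mathcal{C}_{N',g(x)}$ is decomposable. A typical example is $\mathcal{C}_{hp,g(x)}$ with $h>1$, whose structure is that of Theorem~\ref{L_hp}. I would first try the paper's minimum-distance contradiction argument, adapted with the extra factor $h$. If that fails, I would fall back on stating the result with $G$ defined abstractly as the stabilizer of a row, which is the statement as given.
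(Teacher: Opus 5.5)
Your framework is sound and differs from the paper's proof, which just reruns the proof of Theorem~\ref{L_rp} (the Claim built on the weight-two words $x^{tp}-1$, then a minimum-distance argument). You observe that $\mathcal{C}_{N,g(x^s)}$ is the direct sum of $s$ copies of $D:=\mathcal{C}_{N',g(x)}$ on the rows of $Mc_s$, and that permutation automorphisms permute indecomposable components. But the proof stops at the one point that carries all the content. For $s\ge 2$ the conclusion holds exactly when $D$ is indecomposable, and you never decide whether it is.

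Your ``decomposable'' branch cannot be rescued by refining the argument, because the statement is false there. Suppose $D=D_T\oplus D_{T^c}$ for a nonempty proper set $T$ of columns, where $D_T$ consists of the codewords supported on $T$. Write row $i$ of a codeword as $a_i+b_i$ with $a_i\in D_T$ and $b_i\in D_{T^c}$. Exchanging the entries $(1,j)$ and $(2,j)$ for all $j\in T$ turns rows $a_1+b_1,\,a_2+b_2$ into $a_2+b_1,\,a_1+b_2$, so it preserves $\mathcal{C}_{N,g(x^s)}$. It does not preserve the row partition, however, so it lies outside $G\wr S_s$. This actually happens for the degenerate factors: for $g=1$ the code is $\mathbb{F}_{r^\alpha}^N$ with group $S_N$, and $g=x^p-1$ fails similarly.

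Redefining $G$ as a row stabilizer does not help, since the claim that every automorphism maps rows to rows is exactly what must be proved. Also, $\mathcal{C}_{hp,g(x)}$ is not an example of a decomposable code. The block system in $S_h\wr G$ reflects imprimitivity of the group, not a splitting of the code.

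The missing lemma is: \emph{if $g\notin\{1,x^p-1\}$ (which the paper assumes tacitly), then $D$ is indecomposable.} It has a short proof in your language.

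(i) Suppose $N'>p$. Then $D$ contains $x^i(x^p-1)$, i.e.\ the weight-two words $e_{i+p}-e_i$ (indices mod $N'$). It contains no weight-one word, since $g\nmid x^i$. Hence in any splitting $D=D_K\oplus D_{K^c}$ the coordinates $i$ and $i+p$ lie on the same side, so every component is a union of residue classes mod $p$. This is what the paper's Claim captures.

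(ii) The cyclic shift lies in $\operatorname{Per}(D)$ and $p\mid N'$. So the partition of the residue classes induced by the components is invariant under $b\mapsto b+1$ on $\mathbb{Z}/p\mathbb{Z}$. Since $p$ is prime, this partition is either a single block or discrete.

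(iii) In the discrete case $D=\bigoplus_b D_b$, where $D_b$ consists of the codewords supported on class $b$. Then the projection of the codeword of $g(x)$ onto class $0$ lies in $D$. Because $\deg g<p$ and $g(0)\neq 0$, that projection is the nonzero constant $g(0)$, which forces $g\mid 1$, a contradiction. When $N'=p$, skip (i).

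With this lemma, the components of $\mathcal{C}_{N,g(x^s)}$ are exactly the $s$ rows, and your argument yields $\operatorname{Per}(\mathcal{C}_{N,g(x^s)})=G\wr S_s$.
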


\subsection{Cyclic Codes of Length $pq$}

For prime numbers $p\neq r$ and $q\neq r$, we have
\[
    x^{pq} - 1 = (x-1)\cdot Q_p(x)\cdot Q_q(x)\cdot Q_{pq}(x) 
\]
where $Q_{n}(x)$ is well known as the $n$-th cyclotomic polynomial over $\mathbb{F}_{r^\alpha}$ for any positive integer $n$ \cite[Chap. 2]{Lidl_Niederreiter_1996}. Consider the cyclic codes $\mathcal{C}_{pq,Q_{pq}(x)}$ and $\mathcal{C}_{pq,(x-1)Q_p(x)Q_q(x)}$. By the properties of cyclic codes, we can easily obtain that 
\[
    \mathcal{C}_{pq,Q_{pq}(x)}^\perp = \mathcal{C}_{pq,(x-1)Q_p(x)Q_q(x)}. 
\]
From the polynomial correspondence of cyclic codes in (\ref{cyc_poly_cor}), we can derive that 
\[
    \mathcal{C}_{pq,(x-1)Q_p(x)Q_q(x)} = \mathcal{C}_{pq,x-1}\cap\mathcal{C}_{pq,Q_p(x)}\cap\mathcal{C}_{pq,Q_q(x)}. 
\]
By Theorem \ref{L_hp}, we already know that the permutation groups of $\mathcal{C}_{pq,Q_p(x)}$ and $\mathcal{C}_{pq,Q_q(x)}$. Additionally, it is easily to obtain that $\operatorname{Per}(\mathcal{C}_{pq,x-1}) = S_{pq}$. Therefore, we want to investigate the relationship between $\operatorname{Per}(\mathcal{C}_{pq,(x-1)Q_p(x)Q_q(x)})$ and $\operatorname{Per}(\mathcal{C}_{pq,Q_p(x)})(\text{or }\operatorname{Per}(\mathcal{C}_{pq,Q_q(x)}))$ naturally. Hence, in the following, we present our main result about the permutation groups $\operatorname{Per}(\mathcal{C}_{pq,Q_{pq}(x)})$ and $\operatorname{Per}(\mathcal{C}_{pq,(x-1)Q_p(x)Q_q(x)})$.

\begin{theorem}\label{L_pq}
    The permutation group $\operatorname{Per}(\mathcal{C}_{pq,Q_{pq}(x)})$ of $\mathcal{C}_{pq,Q_{pq}(x)}$ is isomorphic to $S_p\times S_q$. Moreover, we have
    \begin{align*}
        \operatorname{Per}(\mathcal{C}_{pq,Q_{pq}(x)}) &= \operatorname{Per}(\mathcal{C}_{pq,(x-1)Q_p(x)Q_q(x)}) \\
        &= \operatorname{Per}(\mathcal{C}_{pq,Q_p(x)})\cap \operatorname{Per}(\mathcal{C}_{pq,Q_q(x)}) \\
        &\cong S_p\times S_q. 
    \end{align*}
\end{theorem}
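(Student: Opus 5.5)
The plan is to work in a ``Chinese remainder'' matrix picture rather than with $Mr$ or $Mc$ directly. Since $\gcd(p,q)=1$, the map $i\mapsto (i \bmod p,\ i\bmod q)$ is a bijection from $\{0,\dots,pq-1\}$ onto a $p\times q$ grid. I will write every word $c\in\mathbb{F}_{r^\alpha}^{pq}$ as a $p\times q$ matrix $M(c)$ whose $(a,b)$-entry is $c_i$ with $i\equiv a \pmod p$ and $i\equiv b\pmod q$. Here $p,q$ are distinct primes different from $r$, so $x^{pq}-1$ is separable. Hence $c(x)$ is divisible by $x^p-1$ if and only if $c(\zeta)=0$ for every $p$-th root of unity $\zeta$. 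Reducing $c(x)$ modulo $x^p-1$ shows that this holds if and only if all row sums of $M(c)$ vanish. Similarly, $c(x)$ vanishes at the nontrivial $p$-th roots of unity if and only if all row sums of $M(c)$ are equal, and the analogous statements hold for columns and $q$. This gives the concrete descriptions
\begin{align*}
D:=\mathcal{C}_{pq,(x-1)Q_p(x)Q_q(x)} &= \{c : \text{all row and column sums of } M(c) \text{ are } 0\},\\
\mathcal{C}_{pq,Q_p(x)} &= \{c : \text{all row sums of } M(c) \text{ are equal}\},
\end{align*}
together with the analogous description of $\mathcal{C}_{pq,Q_q(x)}$.

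First, since coordinate permutations preserve the Euclidean inner product, $\operatorname{Per}(\mathcal{C})=\operatorname{Per}(\mathcal{C}^\perp)$. The stated duality $\mathcal{C}_{pq,Q_{pq}(x)}^\perp=D$ therefore gives the first equality, and it suffices to compute $\operatorname{Per}(D)$. From the descriptions above, every permutation that independently permutes the rows and the columns of the grid preserves $D$, $\mathcal{C}_{pq,Q_p(x)}$ and $\mathcal{C}_{pq,Q_q(x)}$. This yields a subgroup isomorphic to $S_p\times S_q$ inside $\operatorname{Per}(D)$ and inside the intersection. For the intersection, I would use Theorem~\ref{L_hp} with $h=q$ and $g=Q_p$ (when $p\ge 3$, so $\deg Q_p>1$). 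Since $\mathcal{C}_{p,Q_p(x)}$ is the repetition code, its permutation group is $S_p$, so $\operatorname{Per}(\mathcal{C}_{pq,Q_p(x)})$ is exactly the stabilizer of the partition into residue classes mod $p$, that is, into the rows of the grid. The same holds for $q$ and the columns. A permutation preserving both the row partition and the column partition induces some $(\pi,\rho)\in S_p\times S_q$. Because each cell is the intersection of its row and its column, the permutation equals the product of $\pi$ acting on rows and $\rho$ acting on columns. Hence the intersection is $S_p\times S_q$. If $p=2$, then $\operatorname{Per}(\mathcal{C}_{2q,Q_2(x)})$ has to be read off directly from the row-sum description, which yields the same partition stabilizer.

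The main step, and the one I expect to be the obstacle, is the inclusion $\operatorname{Per}(D)\subseteq S_p\times S_q$. I would use minimum-weight codewords. In a nonzero word of $D$, every nonzero row and every nonzero column contains at least two nonzero entries, so the weight is at least $4$. A weight-$4$ codeword is supported on a $2\times 2$ combinatorial rectangle $\{(a,b),(a,b'),(a',b),(a',b')\}$ with entries $\beta,-\beta,-\beta,\beta$, and every such rectangle supports a codeword. Any $\sigma\in\operatorname{Per}(D)$ preserves weight, so it maps rectangle supports to rectangle supports. Now count, for two distinct cells $x,y$, the rectangles containing both. The count is $q-1$ if $x,y$ share a row, $p-1$ if they share a column, and $1$ if they are in different rows and different columns. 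When $p,q\ge3$, these counts separate ``collinear'' pairs from ``diagonal'' pairs, so $\sigma$ preserves the collinearity graph. Its maximal cliques are exactly the rows (of size $q$) and the columns (of size $p$). Since $p\ne q$, $\sigma$ maps rows to rows and columns to columns, and hence lies in $S_p\times S_q$.

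In the degenerate case where one prime, say $q$, equals $2$, the counts coincide, so I would argue differently. The grid is $p\times 2$, and each rectangle support is the union of two full rows of length $2$. The rows are then recovered as the minimal nonempty intersections of rectangle supports, so $\sigma$ permutes the rows. Mapping each cell to the other cell in its row is then determined by the row structure, and the set of columns is recovered from the code by again using weight-$4$ words. This finishes $\operatorname{Per}(D)\cong S_p\times S_q$ and, with the previous paragraph, all the equalities in Theorem~\ref{L_pq}.
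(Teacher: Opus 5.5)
Your proof is correct, and it takes a genuinely different route from the paper. The paper never writes the codes down explicitly. It proves $\operatorname{Per}(\mathcal{C}_{pq,(x-1)Q_p(x)Q_q(x)})\le\operatorname{Per}(\mathcal{C}_{pq,Q_p(x)})\cap\operatorname{Per}(\mathcal{C}_{pq,Q_q(x)})$ (Lemma~\ref{inclu_per}) by a case analysis, in the $Mr$ picture, of how a non-member permutation moves the entries of the words $(x^p-1)(x^{iq}-1)$. It gets the reverse inclusion from $\mathcal{C}_{pq,(x-1)Q_p(x)Q_q(x)}=\mathcal{C}_{pq,x-1}\cap\mathcal{C}_{pq,Q_p(x)}\cap\mathcal{C}_{pq,Q_q(x)}$ together with $\operatorname{Per}(\mathcal{C}_{pq,x-1})=S_{pq}$. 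It then simply asserts $(S_q\wr S_p)\cap(S_p\wr S_q)\cong S_p\times S_q$ via Theorem~\ref{L_hp}.

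You instead use the CRT grid, where all the codes become row/column-sum conditions. You compute $\operatorname{Per}$ of the dual code directly from its minimum-weight words (the $2\times 2$ rectangles) and the rook's graph. You identify the intersection separately as the stabilizer of two transversal partitions, so the middle equality holds because both sides are the same explicit subgroup.

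Your route buys three things: the hard inclusion becomes a clean combinatorial fact rather than a sketchy case analysis; the identification of the intersection with $S_p\times S_q$ is actually proved; and the case $2\in\{p,q\}$ is covered, where Theorem~\ref{L_hp} does not apply since $\deg Q_2=1$. The paper's route, in exchange, reuses Theorem~\ref{L_hp} and is the template it extends to Corollary~\ref{L_pq_col}.

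Two small points. First, in a $p\times q$ grid two cells in a common row lie in $p-1$ rectangles, and two cells in a common column lie in $q-1$. You swapped these, which is harmless since you only use that both counts exceed $1$. Second, for $q=2$ the supports alone cannot see the columns, because each rectangle support is a union of two whole rows. You should make explicit that it is the sign pattern $\beta,-\beta,-\beta,\beta$ that forces a consistent swap within rows: an inconsistent swap produces column sums $\pm2\beta\neq0$. This uses $\operatorname{char}\mathbb{F}_{r^\alpha}\neq 2$, which holds because $r\neq q=2$.
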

 
\begin{remark}
    We note that the above result provides what we believe to be the first determination of the permutation groups of cyclic codes whose generator polynomials are factors of $x^{pq}-1$ but not factors of $x^p-1(\text{or }x^q-1)$.
\end{remark}

Before we provide the proof of Theorem \ref{L_pq}, it is essential to introduce the following lemmas concerning the permutation groups of the dual cyclic codes and certain properties associated with the inclusion relationship of permutation groups of cyclic codes.

\begin{lemma}\label{dual_per}
    For a cyclic code $\mathcal{C}$, we have $\operatorname{Per}(\mathcal{C}) = \operatorname{Per}(\mathcal{C}^\perp)$. 
\end{lemma}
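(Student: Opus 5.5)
The plan is to show that a coordinate permutation preserves the Euclidean inner product, and then use this to move between $\mathcal{C}$ and $\mathcal{C}^\perp$. Nothing specific to cyclic codes is needed: the argument works for any linear code $\mathcal{C}\subseteq\mathbb{F}_{r^\alpha}^n$. It also suffices to prove the inclusion $\operatorname{Per}(\mathcal{C})\subseteq\operatorname{Per}(\mathcal{C}^\perp)$. Once we have it, applying it to $\mathcal{C}^\perp$ and using $(\mathcal{C}^\perp)^\perp=\mathcal{C}$ (valid since $\dim\mathcal{C}^\perp=n-\dim\mathcal{C}$ over a field) gives the reverse inclusion.

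The key identity is
\[
\langle c^\sigma,d^\sigma\rangle=\sum_{i=0}^{n-1}c_{\sigma(i)}d_{\sigma(i)}=\sum_{j=0}^{n-1}c_jd_j=\langle c,d\rangle,
\]
which holds for all $c,d\in\mathbb{F}_{r^\alpha}^n$ and $\sigma\in S_n$. It follows by reindexing the sum via the bijection $j=\sigma(i)$.

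Now take $\sigma\in\operatorname{Per}(\mathcal{C})$ and $d\in\mathcal{C}^\perp$. For any $c\in\mathcal{C}$ we have $\mathcal{C}^{\sigma}=\mathcal{C}$, and since $\sigma$ acts bijectively on $\mathcal{C}$, so does $\sigma^{-1}$. Hence $c=(c')^{\sigma}$ for some $c'\in\mathcal{C}$, and
\[
\langle c,d^\sigma\rangle=\langle (c')^\sigma,d^\sigma\rangle=\langle c',d\rangle=0.
\]
So $d^\sigma\in\mathcal{C}^\perp$, which gives $(\mathcal{C}^\perp)^\sigma\subseteq\mathcal{C}^\perp$. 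Because $\sigma$ is injective on $\mathbb{F}_{r^\alpha}^n$ and $\mathcal{C}^\perp$ is finite, this inclusion is an equality, so $\sigma\in\operatorname{Per}(\mathcal{C}^\perp)$.

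There is no real obstacle here. The only points needing care are checking that the action convention $c^\sigma=(c_{\sigma(0)},\dots)$ is compatible with the reindexing, and justifying the double-dual identity $(\mathcal{C}^\perp)^\perp=\mathcal{C}$ by the dimension count for the nondegenerate form.
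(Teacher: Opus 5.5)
Your proof is correct and is essentially the paper's argument: the inner product is invariant under coordinate permutations, so $\operatorname{Per}(\mathcal{C})\subseteq\operatorname{Per}(\mathcal{C}^\perp)$, and the reverse inclusion follows by symmetry. You make explicit two points the paper leaves implicit. First, finiteness upgrades $(\mathcal{C}^\perp)^\sigma\subseteq\mathcal{C}^\perp$ to equality. Second, the reverse inclusion relies on $(\mathcal{C}^\perp)^\perp=\mathcal{C}$.
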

\begin{proof}
    For any $c\in\mathcal{C}$, $d\in\mathcal{C}^\perp$ and $\sigma \in \operatorname{Per}(\mathcal{C})$, we have
    \[
        \langle c,d \rangle = \langle c^\sigma,d^\sigma\rangle = \langle c^\prime,d^\sigma\rangle = 0,
    \]
    where $c^\prime$ ranges over all elements in $\mathcal{C}$. Therefore, we have $d^\sigma\in\mathcal{C}^\perp$. It follows that $\sigma\in\operatorname{Per}(\mathcal{C}^\perp)$. For any $\tau \in \operatorname{Per}(\mathcal{C}^\perp)$, the above derivation also holds. Hence, we conclude that $\operatorname{Per}(\mathcal{C}) = \operatorname{Per}(\mathcal{C}^\perp)$. The proof is completed. 
\end{proof}

By Lemma \ref{dual_per}, we can thus obtain $\operatorname{Per}(\mathcal{C}_{pq,Q_{pq}(x)}) = \operatorname{Per}(\mathcal{C}_{pq,(x-1)Q_p(x)Q_q(x)})$. We now turn our attention exclusively to $\operatorname{Per}(\mathcal{C}_{pq,(x-1)Q_p(x)Q_q(x)})$.

\begin{lemma}\label{inclu_per}
    For $\mathcal{C}_{pq,(x-1)Q_p(x)Q_q(x)}$ and $\mathcal{C}_{pq,Q_p(x)}(\text{resp. }\mathcal{C}_{pq,Q_q(x)})$, we have
    \[
        \operatorname{Per}(\mathcal{C}_{pq,(x-1)Q_p(x)Q_q(x)})\leqslant \operatorname{Per}(\mathcal{C}_{pq,Q_p(x)})\left(\text{resp. }\operatorname{Per}(\mathcal{C}_{pq,Q_q(x)})\right). 
    \]
\end{lemma}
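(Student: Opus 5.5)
The plan is to pass to a coordinate description of these codes through the Chinese Remainder Theorem. We will show that every $\sigma\in\operatorname{Per}(\mathcal{C}_{pq,(x-1)Q_p(x)Q_q(x)})$ preserves the partition of $\{0,\dots,pq-1\}$ into residue classes modulo $p$, and likewise modulo $q$. We then observe that $\mathcal{C}_{pq,Q_p(x)}$ and $\mathcal{C}_{pq,Q_q(x)}$ are invariant under every permutation preserving the corresponding partition.

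\emph{Step 1: describe the codes by class sums.} Since $p,q\neq r$, the polynomial $x^{pq}-1$ is separable. So a codeword $c(x)$ lies in $\mathcal{C}_{pq,g(x)}$ if and only if $c(\zeta)=0$ for every root $\zeta$ of $g$. Let $P_j=\{i: i\equiv j \bmod p\}$ (each of size $q$) and $R_k=\{i: i\equiv k\bmod q\}$ (each of size $p$).
\begin{itemize}
\item Vanishing of $c$ at all $p$-th roots of unity is equivalent to $\sum_{i\in P_j}c_i=0$ for all $j$, by inverting a Vandermonde/DFT system of size $p$. Vanishing at all $q$-th roots is equivalent to $\sum_{i\in R_k}c_i=0$ for all $k$.
\item Hence $\mathcal{D}:=\mathcal{C}_{pq,(x-1)Q_p(x)Q_q(x)}$ consists of the vectors all of whose $P_j$-sums and $R_k$-sums vanish.
\item Equivalently, $\mathcal{D}^\perp$ is spanned by the indicator vectors $\mathbf 1_{P_j}$ and $\mathbf 1_{R_k}$.
\item Similarly, $\mathcal{C}_{pq,Q_p(x)}$ consists of the vectors whose $P_j$-sums are all equal: vanishing at the roots of $Q_p$ only kills the nontrivial DFT coefficients of the vector of $P_j$-sums.
\end{itemize}
From this last description it is immediate that any permutation mapping the family $\{P_j\}$ onto itself preserves $\mathcal{C}_{pq,Q_p(x)}$. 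The analogous statement holds for $\{R_k\}$ and $\mathcal{C}_{pq,Q_q(x)}$.

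\emph{Step 2: recover the partitions from $\mathcal{D}^\perp$.} By Lemma~\ref{dual_per}, $\operatorname{Per}(\mathcal{D})=\operatorname{Per}(\mathcal{D}^\perp)$, so it suffices to read off both partitions intrinsically from $\mathcal{D}^\perp$.

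By CRT, identify coordinates with cells $(j,k)\in\mathbb{Z}_p\times\mathbb{Z}_q$ of a $p\times q$ array. Then $P_j$ is row $j$ and $R_k$ is column $k$. A general word of $\mathcal{D}^\perp$ is $w=\sum_j a_j\mathbf 1_{P_j}+\sum_k b_k\mathbf 1_{R_k}$, with entries $w_{(j,k)}=a_j+b_k$.

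Assume without loss of generality that $q<p$. I will show two things by counting zeros of $a_j+b_k$.
\begin{itemize}
\item The minimum weight of $\mathcal{D}^\perp$ is $q$. The minimum-weight words are exactly the nonzero scalar multiples of the row indicators $\mathbf 1_{P_j}$.
\item Among words of $\mathcal{D}^\perp$ whose support meets every row in exactly one cell and is not a row, the words of weight exactly $p$ are precisely the multiples of column indicators $\mathbf 1_{R_k}$.
\end{itemize}
For the counting: if the values $b_k$ take $s$ distinct values with multiplicities $m_1,\dots,m_s$, then each row $j$ contributes at most $\max_t m_t$ zeros. This forces the weight to be large unless essentially all $b_k$ are equal (giving a row-type word) or all $a_j$ are equal (giving a column-type word).

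A permutation in $\operatorname{Per}(\mathcal{D}^\perp)$ preserves weights. Therefore it permutes the supports of minimum-weight words, i.e.\ the rows. Using the second characterization, it then also permutes the columns.

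\emph{Step 3: conclude.} Combining Step 2 with Step 1, every $\sigma\in\operatorname{Per}(\mathcal{D})$ preserves both $\mathcal{C}_{pq,Q_p(x)}$ and $\mathcal{C}_{pq,Q_q(x)}$, which is the claimed inclusion.

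The main obstacle is the weight analysis in Step 2, in particular cleanly isolating the column indicators as a second intrinsic family. My first attempt would be to use the alternative characterization \emph{supports of weight-$p$ words of $\mathcal{D}^\perp$ meeting each minimum-weight support in exactly one coordinate}. This condition is visibly permutation-invariant once rows are known to be preserved, and it reduces the claim to a short case check on $a_j+b_k$.
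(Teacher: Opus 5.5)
Your Steps 1 and 3 are correct, and so is the identification in Step 2 of the rows $P_j$ as the supports of minimum-weight words of $\mathcal{D}^\perp$. This dual/weight route differs from the paper's. The paper invokes Theorem~\ref{L_hp} to get $\operatorname{Per}(\mathcal{C}_{pq,Q_q(x)})\cong S_p\wr S_q$ exactly, then pushes the weight-four codewords $(x^p-1)(x^{iq}-1)$ of $\mathcal{D}$ through a case analysis. Your route needs only the trivial inclusion of the partition stabilizer into $\operatorname{Per}(\mathcal{C}_{pq,Q_p(x)})$.

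The gap is in the characterization of the columns, which fails when $q=2$. Your case check says: if $\operatorname{supp}(w)$ meets each row in exactly one cell, then for every $j$ exactly $q-1$ of the $b_k$ equal $-a_j$. This forces a common value of the $a_j$ only when $q-1>q/2$, i.e.\ $q\ge 3$. Under your normalization $q<p$, the case $q=2$ is allowed by the hypotheses whenever $r$ is odd, and there the claim is false. Take $b=(0,1)$ and arbitrary $a_j\in\{0,-1\}$: every transversal of the $p\times 2$ grid is then the support of a word of $\mathcal{D}^\perp$. Concretely, $\mathcal{D}^\perp=\mathcal{C}_{2p,Q_{2p}(x)}$ contains its generator $Q_{2p}(x)=\sum_{i=0}^{p-1}(-x)^i$. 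It has weight $p$, and its support $\{0,\dots,p-1\}$ meets every residue class mod $p$ exactly once but is not a residue class mod $2$. So the family you propose is invariant under every row-preserving permutation and says nothing about the columns.

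The repair is to use values, not just supports. For $q=2$, a word $a_j+b_k$ meeting each row once equals $b_1-b_0$ on its cells in $R_1$ and $b_0-b_1$ on its cells in $R_0$. The characteristic is odd here, so this word is constant on its support if and only if the support is a column. Being constant on the support is preserved by any coordinate permutation. Hence the family of \emph{supports of words of $\mathcal{D}^\perp$ that meet each row exactly once and are constant on their support} recovers the columns for every $q\ge 2$; for $q\ge 3$ the constancy condition is automatic by your own argument. With this patch the proof goes through. It in fact shows directly that $\operatorname{Per}(\mathcal{D})$ is the stabilizer of both partitions, i.e.\ isomorphic to $S_p\times S_q$.
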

\begin{proof}
    W.L.O.G. we can assume that $p<q$. For $\mathcal{C}_{pq,Q_q(x)}$, by Theorem \ref{L_hp}, we have $\operatorname{Per}(\mathcal{C}_{pq,Q_q(x)})\cong S_p \wr S_q$. Let $\varphi\notin \operatorname{Per}(\mathcal{C}_{pq,Q_q(x)})$. We use the first type of matrix representation for $\mathcal{C}_{pq,(x-1)Q_p(x)Q_q(x)}$, and consider the codeword $w_i\in\mathcal{C}_{pq,(x-1)Q_p(x)Q_q(x)}$ corresponding to the polynomial $(x^p-1)(x^{iq}-1)$ for $1\le i\le p-1$ with
    \[
        Mr_p(w_i) = \begin{pmatrix}
            1 & \cdots & -1 & \cdots \\
              & \cdots &  & \\
            -1 & \cdots & 1 & \cdots \\
              & \cdots &  & 
        \end{pmatrix},
    \]
    where the $(1,p+1)$-entry is $-1$, the $(i+1,1)$-entry is $-1$ and the $(i+1,p+1)$-entry is $1$. We can assume that $\varphi$ fixes the $(1,1)$-entry and maps $(1,p+1)$ and $(i_0+1,p+1)$ entries to two different columns for some $i_0\in \{1,\cdots,p-1\}$. 
    \begin{enumerate}
        \item If $Mr_p(w_{i_0})^\varphi$(act by a suitable permutation of $\operatorname{Per}(\mathcal{C}_{pq,Q_q(x)})$ if necessary) equals one of the following four cases
        \[
            \begin{pmatrix}
                1 & -1 & \cdots \\
                1 & -1 & \cdots \\
                  &  & \cdots 
            \end{pmatrix},\,\begin{pmatrix}
                1 & 1 & -1 & \cdots \\
                -1 & & \cdots & \\
                  &  & \cdots & 
            \end{pmatrix},
        \]
        \[
            \begin{pmatrix}
                1 & -1 & \cdots \\
                -1 & & \cdots \\
                1  & & \cdots \\
                  & & \cdots
            \end{pmatrix},\,\begin{pmatrix}
                1 & -1 & -1 & 1 & \cdots \\
                  & & \cdots  & 
            \end{pmatrix}, 
        \]
        then we can easily derive that the above four codewords are not in $\mathcal{C}_{pq,Q_q(x)}$, and of course not in $\mathcal{C}_{pq,(x-1)Q_p(x)Q_q(x)}$. Therefore, we have $\varphi\notin \operatorname{Per}(\mathcal{C}_{pq,(x-1)Q_p(x)Q_q(x)})$. 
        
        \item If 
        \[
            Mr_p(w_{i_0})^\varphi= \begin{pmatrix}
                1 & \cdots & -1 & \cdots \\
                  & \cdots &  & \\
                -1 & \cdots & 1 & \cdots \\
                  & \cdots &  & 
            \end{pmatrix},
        \]
        which is still in $\mathcal{C}_{pq,(x-1)Q_p(x)Q_q(x)}$, then we can consider $Mr_p(w_{i_0})^{\tau^{p}}$ with $\tau = (1,2,\cdots,pq)$ and keep go on. It can be seen that if, for all $0\le t\le q-1$, we have $Mr_p(w_{i_0})^{\tau^{tp}\cdot\varphi}\in \mathcal{C}_{pq,(x-1)Q_p(x)Q_q(x)}$, then $\varphi$ must map the $(\ell_1+1,\ell_2)$-entry to the first column where $(t+1)p+1 = \ell_1q+\ell_2$ with $0\le \ell_2<q$, which is a contradiction since this will result in $q$ positions being mapped to the first column but $p<q$. Therefore, there exists $t_0\in \{0,\cdots,q-1\}$ such that $Mr_p(w_{i_0})^{\tau^{t_0p}\cdot\varphi}\notin \mathcal{C}_{pq,(x-1)Q_p(x)Q_q(x)}$. It follows that $\varphi\notin \operatorname{Per}(\mathcal{C}_{pq,(x-1)Q_p(x)Q_q(x)})$. 
    \end{enumerate}
    From the above deduction, we conclude that $\operatorname{Per}(\mathcal{C}_{pq,(x-1)Q_p(x)Q_q(x)})\leqslant \operatorname{Per}(\mathcal{C}_{pq,Q_q(x)})$. 

    For $\mathcal{C}_{pq,Q_p(x)}$, we also use the first type of matrix representation for $\mathcal{C}_{pq,(x-1)Q_p(x)Q_q(x)}$, and consider $w_i\in\mathcal{C}_{pq,(x-1)Q_p(x)Q_q(x)}$ corresponding to $(x^p-1)(x^{iq}-1)$ for $1\le i\le p-1$ with
    \[
        Mr_q(w_i) = \begin{pmatrix}
            1 & \cdots &  &  \\
            -1 & \cdots &  & \\
              & \cdots & -1 & \cdots \\
              & \cdots & 1 & \cdots \\
              & \cdots &  &
        \end{pmatrix},
    \]
    where the $(2,1)$-entry is $-1$, the $(\ell_{i_1}+1,\ell_{i_2})$-entry is $-1$ and the $(\ell_{i_1}+2\bmod{q},\ell_{i_2})$-entry is $1$ with $iq+1 = \ell_{i_1}p+\ell_{i_2}$ and $0\le\ell_2<p$. Let $\varphi\notin \operatorname{Per}(\mathcal{C}_{pq,Q_p(x)})$. We can assume that $\varphi$ fixes the $(1,1)$-entry and maps $(\ell_{i^\prime_1}+1,\ell_{i^\prime_2})$ and $(\ell_{i^\prime_1}+2\bmod{q},\ell_{i^\prime_2})$ entries to two different columns for some $i^\prime\in \{1,\cdots,p-1\}$. Then, similar to the above derivation, if $Mr_q(w_{i^\prime})^\varphi$ is still in $\mathcal{C}_{pq,(x-1)Q_p(x)Q_q(x)}$, then by considering $Mr_q(w_{i^\prime})^{\tau^{\ell q+tp}\cdot\varphi}$ with $0\le \ell\le p-1$ and $0\le t\le q-1$, we can derive that the entries corresponding to $x^{tp},x^{q+tp},\cdots,x^{(p-1)q+tp}$ are mapped to the same column. Therefore, there are $\ell^\prime p$ positions that are mapped to the first column where $\ell^\prime$ is a positive integer. It is a contradiction since $\gcd(p,q)=1$. Hence, we have $\varphi\notin \operatorname{Per}(\mathcal{C}_{pq,(x-1)Q_p(x)Q_q(x)})$, and thus we conclude that $\operatorname{Per}(\mathcal{C}_{pq,(x-1)Q_p(x)Q_q(x)})\leqslant \operatorname{Per}(\mathcal{C}_{pq,Q_p(x)})$. The proof is completed. 
\end{proof}

\noindent{\raggedright{\it Proof of Theorem \ref{L_pq}.}} Firstly, by Lemma \ref{inclu_per}, we derive that
\[
    \operatorname{Per}(\mathcal{C}_{pq,(x-1)Q_p(x)Q_q(x)})\leqslant \operatorname{Per}(\mathcal{C}_{pq,Q_p(x)})\cap\operatorname{Per}(\mathcal{C}_{pq,Q_q(x)}). 
\]
Secondly, for any $\varphi\in \operatorname{Per}(\mathcal{C}_{pq,x-1})\cap\operatorname{Per}(\mathcal{C}_{pq,Q_p(x)})\cap\operatorname{Per}(\mathcal{C}_{pq,Q_q(x)})$ and $w\in \mathcal{C}_{pq,(x-1)Q_p(x)Q_q(x)}$, we have
\[
    w^\varphi\in \mathcal{C}_{pq,x-1}\cap\mathcal{C}_{pq,Q_p(x)}\cap\mathcal{C}_{pq,Q_q(x)} = \mathcal{C}_{pq,(x-1)Q_p(x)Q_q(x)}.
\]
It follows that $\varphi \in \operatorname{Per}(\mathcal{C}_{pq,(x-1)Q_p(x)Q_q(x)})$. Therefore, we have
\begin{align*}
    \operatorname{Per}(\mathcal{C}_{pq,x-1})\cap\operatorname{Per}(\mathcal{C}_{pq,Q_p(x)})\cap\operatorname{Per}(\mathcal{C}_{pq,Q_q(x)}) &= \operatorname{Per}(\mathcal{C}_{pq,Q_p(x)})\cap\operatorname{Per}(\mathcal{C}_{pq,Q_q(x)}) \\
    &\leqslant \operatorname{Per}(\mathcal{C}_{pq,(x-1)Q_p(x)Q_q(x)}).
\end{align*}
And thus, by Lemma \ref{dual_per}, we can conclude that
\begin{align*}
    \operatorname{Per}(\mathcal{C}_{pq,Q_{pq}(x)}) &= \operatorname{Per}(\mathcal{C}_{pq,(x-1)Q_p(x)Q_q(x)}) \\
        &= \operatorname{Per}(\mathcal{C}_{pq,Q_p(x)})\cap \operatorname{Per}(\mathcal{C}_{pq,Q_q(x)}) \\
        &\cong (S_q \wr S_p)\cap(S_p \wr S_q) \\
        &\cong S_p\times S_q. 
\end{align*}
The proof is completed. \qed

From the proof of Lemma~\ref{inclu_per}, we can also obtain the following corollary.

\begin{corollary}\label{L_pq_col}
    For any positive integer $h$, we have
    \[
        \operatorname{Per}(\mathcal{C}_{hpq,Q_p(x)Q_q(x)}) = \operatorname{Per}(\mathcal{C}_{hpq,Q_p(x)})\cap\operatorname{Per}(\mathcal{C}_{hpq,Q_q(x)})\cong S_h\wr(S_p\times S_q).
    \]
\end{corollary}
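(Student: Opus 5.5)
The plan is to reduce Corollary~\ref{L_pq_col} to Theorem~\ref{L_pq} and Theorem~\ref{L_hp}, in the same way that Theorem~\ref{L_hp} reduced length $hp$ to length $p$.

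Set $N=pq$ and $g(x)=Q_p(x)Q_q(x)$. Since $g(x)\mid x^{pq}-1$, we use the first type of matrix representation $Mr_h$ with rows of length $pq$. A vector $w$ lies in $\mathcal{C}_{hpq,g(x)}$ exactly when its polynomial is divisible by $g(x)$. Because $x^{pq}\equiv 1 \pmod{x^{pq}-1}$, this is equivalent to the row-sum vector of $Mr_h(w)$ lying in $\mathcal{C}_{pq,g(x)}$. The same criterion holds for $Q_p(x)$ and for $Q_q(x)$ separately.

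\textbf{First equality.} One inclusion is immediate: $\mathcal{C}_{hpq,Q_pQ_q}=\mathcal{C}_{hpq,Q_p}\cap\mathcal{C}_{hpq,Q_q}$, since $Q_p$ and $Q_q$ are coprime. Hence the intersection of the two permutation groups preserves $\mathcal{C}_{hpq,Q_pQ_q}$.

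For the reverse inclusion, I would imitate the proof of Lemma~\ref{inclu_per}. The goal is to show that a permutation outside $\operatorname{Per}(\mathcal{C}_{hpq,Q_q})$ moves some codeword out of the code. The test words would be
\[
(x^p-1)(x^{iq}-1)x^{jpq}
\]
together with the differences $x^{jpq}(x^{pq}-1)$, all of which lie in $\mathcal{C}_{hpq,Q_pQ_q}$. By Theorem~\ref{L_hp}, $\operatorname{Per}(\mathcal{C}_{hpq,Q_q})\cong S_{hp}\wr S_q$. The counting argument comparing $p$ and $q$ (with $\gcd(p,q)=1$) then forces any element of $\operatorname{Per}(\mathcal{C}_{hpq,Q_pQ_q})$ to preserve the block system of residues mod $q$. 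Symmetrically it must preserve residues mod $p$.

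\textbf{Identifying the group.} Write coordinates as $k=jpq+s$ with $0\le s<pq$. By the Chinese remainder theorem, $s$ corresponds to $(s\bmod p,\ s\bmod q)$. Then:
\begin{itemize}
\item $\operatorname{Per}(\mathcal{C}_{hpq,Q_p})$ is the stabilizer of the partition into $p$ classes mod $p$, with the symmetric group of each class and $S_p$ permuting the classes.
\item $\operatorname{Per}(\mathcal{C}_{hpq,Q_q})$ is the analogous stabilizer for the partition mod $q$.
\end{itemize}
A permutation in both groups induces $\pi_p\in S_p$ on residues mod $p$ and $\pi_q\in S_q$ on residues mod $q$. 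Because a class mod $p$ meets a class mod $q$ in exactly the $h$ coordinates $\{jpq+s\}$ with $s$ fixed, the permutation maps each such fibre to the fibre indexed by $(\pi_p,\pi_q)(s)$. Within each fibre it acts arbitrarily, since these internal moves preserve both partitions.

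This gives the group $S_h^{pq}\rtimes(S_p\times S_q)$, with $S_p\times S_q$ acting on the $pq$ fibres via the CRT identification. That is exactly $S_h\wr(S_p\times S_q)$. The base group lies in $\operatorname{Per}(\mathcal{C}_{hpq,Q_pQ_q})$ by the argument of Lemma~\ref{L_hp_row}, because a row swap within a column changes the word by a multiple of $x^{pq}-1$. The top group lies in it by Theorem~\ref{L_pq} combined with the lifting argument of Lemma~\ref{L_hp_col}.

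\textbf{Main obstacle.} The hard step is the reverse inclusion $\operatorname{Per}(\mathcal{C}_{hpq,Q_pQ_q})\le\operatorname{Per}(\mathcal{C}_{hpq,Q_p})$, and likewise for $Q_q$. Here the extra factor $h$ means the test codewords may spill across rows, which the case analysis of Lemma~\ref{inclu_per} must absorb. I would first normalize by the base group $S_h^{pq}$ so that the fibres are permuted as whole blocks. I would then apply the argument of Lemma~\ref{inclu_per} to the induced permutation of the $pq$ fibres.
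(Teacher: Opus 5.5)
Your skeleton is the paper's. The paper also obtains $\operatorname{Per}(\mathcal{C}_{hpq,Q_p(x)Q_q(x)})\leqslant\operatorname{Per}(\mathcal{C}_{hpq,Q_p(x)})\cap\operatorname{Per}(\mathcal{C}_{hpq,Q_q(x)})$ ``similar to Lemma~\ref{inclu_per}''. It gets the reverse inclusion from $\mathcal{C}_{hpq,Q_pQ_q}=\mathcal{C}_{hpq,Q_p}\cap\mathcal{C}_{hpq,Q_q}$, and then identifies $(S_{hq}\wr S_p)\cap(S_{hp}\wr S_q)$. Your row-sum criterion, the easy inclusion, and your CRT description of the intersection are all correct; the last (fibres $\{jpq+s\}$ of size $h$, permuted by $S_p\times S_q$ in product action) is more explicit than the paper's.

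The gap is the hard inclusion. You only assert it (``the counting argument \dots\ then forces''), and the concrete mechanism you propose fails: you cannot \emph{normalize} $\varphi$ by the base group $S_h^{pq}$ so that it permutes the fibres as whole blocks. Elements of $S_h^{pq}$, indeed of all of $S_h\wr(S_p\times S_q)$, send fibres onto fibres. So whether $\varphi$ sends every fibre onto a fibre is unchanged by multiplying it on either side by such elements. Fibre preservation is not something you can arrange; it is exactly what must be extracted from the code. Also, Lemma~\ref{inclu_per} concerns $\mathcal{C}_{pq,(x-1)Q_p(x)Q_q(x)}$, not $\mathcal{C}_{pq,Q_p(x)Q_q(x)}$, so it does not apply verbatim to an induced permutation of the $pq$ fibres.

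A short way to close the gap uses weight-$2$ codewords. For $h\ge2$, $\mathcal{C}=\mathcal{C}_{hpq,Q_p(x)Q_q(x)}$ (all mod-$p$ class sums equal, all mod-$q$ class sums equal) has no weight-$1$ words. Its weight-$2$ words are exactly $\beta(e_k-e_{k'})$ with $k\equiv k'\pmod{pq}$. To see this, take $w=\beta e_k+\beta'e_{k'}$:
\begin{itemize}
\item If $p\ge3$ and $k\not\equiv k'\pmod p$, some mod-$p$ class sum is $0$ while another is $\beta\neq0$. Hence $k\equiv k'\pmod p$ and $\beta'=-\beta$.
\item If $p=2$ and $k\not\equiv k'\pmod 2$, you would need $\beta'=\beta$ together with $\beta'=-\beta$ (the latter from $q\ge 3$). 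This is impossible since $r\neq2$.
\item The same argument applies with $q$ in place of $p$.
\end{itemize}
Since $\varphi$ preserves weights, it maps same-fibre pairs to same-fibre pairs, so it permutes the fibres via some $\bar\varphi\in S_{pq}$ (trivially so for $h=1$).

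Next, place any $c'\in\mathcal{C}_{pq,Q_pQ_q}$ in the first row and apply your row-sum criterion to its image; this gives $\bar\varphi\in\operatorname{Per}(\mathcal{C}_{pq,Q_pQ_q})$. Since $\mathcal{C}_{pq,(x-1)Q_pQ_q}=\mathcal{C}_{pq,Q_pQ_q}\cap\mathcal{C}_{pq,x-1}$ and $\operatorname{Per}(\mathcal{C}_{pq,x-1})=S_{pq}$, Theorem~\ref{L_pq} yields $\bar\varphi\in\operatorname{Per}(\mathcal{C}_{pq,(x-1)Q_pQ_q})=\operatorname{Per}(\mathcal{C}_{pq,Q_p})\cap\operatorname{Per}(\mathcal{C}_{pq,Q_q})$. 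Thus $\bar\varphi$ respects residues mod $p$ and mod $q$ on fibre indices. Hence $\varphi$ respects them on all $hpq$ coordinates, i.e.\ $\varphi$ lies in the intersection. This replaces the unproved counting step and avoids rerunning the case analysis of Lemma~\ref{inclu_per} at length $hpq$.
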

\begin{proof}
    Firstly, similar to the proof of Lemma~\ref{inclu_per}, we also have
    \[
        \operatorname{Per}(\mathcal{C}_{hpq,Q_p(x)Q_q(x)})\leqslant \operatorname{Per}(\mathcal{C}_{hpq,Q_p(x)}),
    \]
    and 
    \[
        \operatorname{Per}(\mathcal{C}_{hpq,Q_p(x)Q_q(x)})\leqslant \operatorname{Per}(\mathcal{C}_{hpq,Q_q(x)}). 
    \]
    Therefore, similar to the proof of Theorem~\ref{L_pq}, we can derive that
    \begin{align*}
       \operatorname{Per}(\mathcal{C}_{hpq,Q_p(x)Q_q(x)}) &= \operatorname{Per}(\mathcal{C}_{hpq,Q_p(x)})\cap \operatorname{Per}(\mathcal{C}_{hpq,Q_q(x)}) \\
        &\cong (S_{hq} \wr S_p)\cap(S_{hp} \wr S_q) \\
        &\cong S_h \wr (S_p\times S_q). 
    \end{align*}
    The proof is completed. 
\end{proof}

\section{Conclusion}\label{Con}

\begin{table*}[!t]
    \caption{Binary Cyclic Codes $\mathcal{C}_{n,g(x)}$ of Length $n$ with Generator Polynomial $g(x)$}
	\centering
    \renewcommand{\arraystretch}{1.5}
	\begin{tabular}{lll}
        \toprule[1.2pt]
		Length $n$ & Generator polynomial $g(x)$ & Permutation group ${\rm Per}(\mathcal{C}_{n,g(x)})$ \\
        \midrule[1.2pt]
		$7$ & $x^3+x+1\text{ or }x^3+x^2+1$ & $\operatorname{PSL}(2,7)$ \\   
        
		$2\cdot 7$ & $x^3+x+1\text{ or }x^3+x^2+1$ & $S_2 \wr \operatorname{PSL}(2,7)$ \\   
        
		$2\cdot 7$ & $(x^3+x+1)^2\text{ or }(x^3+x^2+1)^2$ & $\operatorname{PSL}(2,7) \wr S_2$ \\   

        $3\cdot 7$ & $x^3+x+1\text{ or }x^3+x^2+1$ & $S_3 \wr \operatorname{PSL}(2,7)$ \\   

        $3\cdot 2\cdot 7$ & $x^3+x+1\text{ or }x^3+x^2+1$ & $S_{6} \wr \operatorname{PSL}(2,7) $ \\   
        
		$7^2$ & $x^3+x+1\text{ or }x^3+x^2+1$ & $S_7 \wr \operatorname{PSL}(2,7)$ \\   
        
		$7^2$ & $x^{21}+x^7+1\text{ or }x^{21}+x^{14}+1$ & $\operatorname{PSL}(2,7) \wr S_7$ \\   
        
        $2\cdot 7^2$ & $x^{21}+x^7+1\text{ or }x^{21}+x^{14}+1$ & $\big(S_2 \wr \operatorname{PSL}(2,7)\big) \wr S_7$ \\   
        
		$2\cdot 7^2$ & $(x^{21}+x^7+1)^2\text{ or }(x^{21}+x^{14}+1)^2$ & $\operatorname{PSL}(2,7) \wr S_{14}$ \\   

        $2^2\cdot 7^2$ & $(x^3+x+1)^4\text{ or }(x^3+x^2+1)^4$ & $\big(S_7 \wr \operatorname{PSL}(2,7)\big) \wr S_4$ \\   
        
		$2^2\cdot 7^2$ & $(x^{21}+x^7+1)^2\text{ or }(x^{21}+x^{14}+1)^2$ & $\big(S_2 \wr \operatorname{PSL}(2,7)\big) \wr S_{14}$ \\   

        $2^2\cdot 7^2$ & $(x^{21}+x^7+1)^4\text{ or }(x^{21}+x^{14}+1)^4$ & $ \operatorname{PSL}(2,7) \wr S_{28}$ \\   

        $3\cdot2\cdot 7^2$ & $x^{21}+x^7+1\text{ or }x^{21}+x^{14}+1$ & $\big(S_6 \wr \operatorname{PSL}(2,7)\big) \wr S_{7}$ \\   

        $3\cdot2\cdot 7^2$ & $(x^{21}+x^7+1)^2\text{ or }(x^{21}+x^{14}+1)^2$ & $\big(S_3 \wr \operatorname{PSL}(2,7)\big) \wr S_{14}$ \\   

        $31$ & $(x^5 + x^2 + 1)(x^5 + x^3 + 1)(x^5 + x^3 + x^2 + x + 1)$ & $C_{31}\rtimes C_5$ \\   

        $2\cdot 31$ & $(x^5 + x^2 + 1)^2(x^5 + x^3 + 1)^2(x^5 + x^3 + x^2 + x + 1)^2$ & $(C_{31}\rtimes C_5) \wr S_2$ \\   

        $31^2$ & $(x^{155} + x^{62} + 1)(x^{155} + x^{93} + 1)(x^{155} + x^{93} + x^{62} + x^{31} + 1)$ & $(C_{31}\rtimes C_5) \wr S_{31}$ \\   

        $2^2\cdot 31^2$ & $(x^{155} + x^{62} + 1)^2(x^{155} + x^{93} + 1)^2(x^{155} + x^{93} + x^{62} + x^{31} + 1)^2$ & $\big(S_{2} \wr (C_{31}\rtimes C_5)\big) \wr S_{62}$ \\   

        $3$ & $Q_{3}(x)\,(3\text{-th cyclotomic polynomial})$ & $S_3$ \\   
        
        $5$ & $Q_{5}(x)\,(5\text{-th cyclotomic polynomial})$ & $S_5$ \\   

        $7$ & $Q_{7}(x)\,(7\text{-th cyclotomic polynomial})$ & $S_7$ \\   

        $31$ & $Q_{31}(x)\,(31\text{-th cyclotomic polynomial})$ & $S_{31}$ \\   

        $3\cdot 5$ & $(x-1)\cdot Q_{3}(x)\cdot Q_{5}(x)$ & $S_3\times S_{5}$ \\   

        $3\cdot 5$ & $Q_{15}(x)\,(217\text{-th cyclotomic polynomial})$ & $S_3\times S_{5}$ \\   

        $7\cdot 31$ & $(x-1)\cdot Q_{7}(x)\cdot Q_{31}(x)$ & $S_7\times S_{31}$ \\   

        $7\cdot 31$ & $Q_{217}(x)\,(217\text{-th cyclotomic polynomial})$ & $S_7\times S_{31}$ \\   

        $2\cdot 3\cdot 5$ & $Q_{3}(x)\cdot Q_{5}(x)$ & $S_2 \wr (S_3\times S_{5})$ \\   

        $7\cdot 3\cdot 5$ & $Q_{3}(x)\cdot Q_{5}(x)$ & $S_7 \wr (S_3\times S_{5})$ \\   

        $3\cdot 5\cdot 7$ & $Q_{5}(x)\cdot Q_{7}(x)$ & $S_3 \wr (S_5\times S_{7})$ \\
        \bottomrule[1.2pt]
	\end{tabular}
	\label{B_E}
    \end{table*}

Let $g(x)\in\mathbb{F}_{r^\alpha}[x]$ be a factor of $x^p-1$. The contributions of this paper are summarized as follows.
\begin{itemize}
    \item For a positive integer $h$ and a prime number $p$, the permutation group of $\mathcal{C}_{hp,g(x)}$ is completely determined (see Theorem~\ref{L_hp}). 
    \item Let $n$ be a positive number and $m$ be a non-negative number. For any $0\le u\le m$ and $0\le v\le n-1$, the permutation group of $\mathcal{C}_{r^mp^n,g(x^{r^up^v})}$ is completely determined (see Theorem~\ref{L_rp}), and the permutation group of $\mathcal{C}_{hr^mp^n,g(x^{r^up^v})}$ is completely determined (see Corollary~\ref{L_rp_col}). 
    \item For distinct prime numbers $p\neq r$ and $q\neq r$, the permutation groups  of $\mathcal{C}_{pq,Q_{pq}(x)}$ and $\mathcal{C}_{pq,(x-1)Q_p(x)Q_q(x)}$ are completely determined (see Theorem~\ref{L_pq}), and the permutation group of $\mathcal{C}_{hpq,Q_p(x)Q_q(x)}$ are completely determined (see Corollary~\ref{L_pq_col}). Moreover, it is noteworthy that, to our knowledge, this is the first time that the permutation groups of cyclic codes whose generator polynomials are factors of $x^{pq}-1$ but not factors of $x^p-1(\text{or }x^q-1)$ has been obtained.
\end{itemize}
Besides, with the help of~\textsc{Magma}, we present the computational results about certain binary cyclic codes in Table~\ref{B_E}, which are consistent with the results presented in this paper. 

The results of this paper indicated that a strong connection exists between the permutation groups of many cyclic codes with very long lengths and special generator polynomials, and those of cyclic codes of prime lengths. It would be interesting to obtain the permutation groups of cyclic codes with more specific lengths such as $pq^2$, as well as those with more general generator polynomials by using the two matrix representations presented in this paper.

\section*{Acknowledgment}

The first and the third authors are supported by GuangDong Basic and Applied Basic Research Foundation (No. 2025A1515011764), the National Natural Science Foundation of China (No. 12441107) and National Key Research and Development Program of
China (No.2025YFA1017100). The second author is supported by NSF of Chongqing (Grant No. CSTB2025NSCQ-GPX0333). The authors are grateful to Mr. Haojie Chen for his discussions regarding the results in this paper.

\bibliographystyle{ieeetr}
\bibliography{reference}

\end{document}